\newtheorem{TEO}{TEO}
\newtheorem{CEO}{CEO}
\newtheorem{Corollary}[CEO]{Corollary}
\newtheorem{proposition}[TEO]{Proposition}
\newtheorem{Remark}{Remark}
\journal{}
\begin{document}

\begin{frontmatter}

%% Title, authors and addresses

%% use the tnoteref command within \title for footnotes;
%% use the tnotetext command for the associated footnote;
%% use the fnref command within \author or \address for footnotes;
%% use the fntext command for the associated footnote;
%% use the corref command within \author for corresponding author footnotes;
%% use the cortext command for the associated footnote;
%% use the ead command for the email address,
%% and the form \ead[url] for the home page:
\title{A new class of fatigue life distributions}

\author[label4]{Marcelo Bourguignon\fnref{label1}}
 \address[label4]{Universidade Federal de Pernambuco\\
 Departamento de Estat\'istica, Cidade Universit\'aria, 50740-540 Recife, PE, Brazil
 \vspace{.1cm}}

\author[label4]{Rodrigo B. Silva\fnref{label2}}

\author[label4]{Gauss M. Cordeiro\fnref{label3}}

\fntext[label1]{Corresponding author.  Email: \texttt{m.p.bourguignon@gmail.com}}
\fntext[label2]{Email: \texttt{rodrigobs29@gmail.com}}
\fntext[label3]{Email: \texttt{gauss@de.ufpe.br}}

\begin{abstract}

In this paper, we introduce the Birnbaum-Saunders power series class of distributions which is
obtained by compounding Birnbaum-Saunders and power series distributions. The new class of distributions has as a particular case the two-parameter Birnbaum-Saunders distribution. The hazard rate function of the proposed class can be increasing and upside-down bathtub shaped.
We provide important mathematical properties such as moments, order statistics, estimation of the parameters and inference for large sample. Three special cases of the new class are investigated with some details. We illustrate the usefulness of the new distributions by means of two applications to real data sets. \\\\
\emph{Keywords:} Birnbaum-Saunders distribution; Maximum likelihood estimation; Order statistic; Power series distribution.
\end{abstract}

\end{frontmatter}

\section{Introduction}

Birnbaum and Saunders (1969) introduced a two-parameter distribution in order to model the failure time distribution for fatigue failure caused under cyclic loading, called the Birnbaum-Saunders distribution ($\mathcal{BS}$). Since then, this distribution has been extensively used for modelling failure times of fatiguing materials in several fields such as environmental and medical sciences, engineering, de\-mo\-gra\-phy, biolo\-gical studies, actuarial, economics, finance and insurance. A random variable $T$ following the $\mathcal{BS}(\alpha, \beta)$ distribution with shape parameter $\alpha > 0$ and scale parameter $\beta > 0$ is defined by
\begin{equation*}
 T = \beta \left\{\frac{\alpha Z}{2}+ \left[\left(\frac{\alpha Z}{2}\right)^2+1\right]^{1/2}\right\}^2,
\end{equation*}
where $Z$ is a standard normal random variable. The cumulative distribution function (cdf) of $T \sim \mathcal{BS}(\alpha, \beta)$ has the form
\begin{equation}\label{cdfbs}
 F_{\mathcal{BS}}(t; \alpha, \beta) = \Phi(\upsilon), \quad t>0,
\end{equation}
where $\upsilon = \alpha^{-1}\rho(t/\beta), \, \rho(z) = z^{1/2}-z^{-1/2}$ and $\Phi(\cdot)$ is the standard normal cumulative function. The corresponding probability density function (pdf) is
\begin{equation}\label{pdfbs}
 f_{\mathcal{BS}}(t; \alpha, \beta) = \kappa(\alpha, \beta)\,t^{-3/2}\,(t+\beta)\,\exp\left[-\frac{\tau(t/\beta)}{2\alpha^2}\right], \quad t>0,
\end{equation}
where $\kappa(\alpha, \beta) = \exp(\alpha^{-2})/ (2\alpha\sqrt{2\pi \beta})$ and $\tau(z)= z+z^{-1}$. The parameter $\beta$ is the median of the distribution, i.e. $F_{\mathcal{BS}}(\beta) = \Phi(0) = 1/2$. For any $k>0,\, kT \sim \mathcal{BS}(\alpha, k\beta)$. Additionally, the reciprocal property holds, i.e., $T^{-1} \sim \mathcal{BS}(\alpha, \beta^{-1})$, see Saunders (1974). The expected value and variance are $\operatorname{E}(T) = \beta (1 + \alpha^2/2)$ and $\operatorname{Var}(T) = (\alpha \beta)^2(1+5\alpha^2/4)$, respectively.

The $\mathcal{BS}$ distribution has a close relation to the normal distribution and has been widely studied in the statistical literature. For example, Ng \emph{et al}. (2006) proposed a simple bias-reduction method to reduce the biases of the maximum likelihood estimators (MLEs) for the $\mathcal{BS}$ model. They also discussed a Monte Carlo EM-algorithm for computing theses estimators. Kundu \emph{et al}. (2008) investigated the shape of the $\mathcal{BS}$ hazard function.

In the last few years, several classes of distributions were proposed by compounding some useful lifetime and power series distributions. Chahkandi and Ganjali (2009) studied the exponential power series ($\mathcal{EPS}$) family of distributions, which contains as special cases the exponential geometric ($\mathcal{EG}$), exponential Poisson ($\mathcal{EP}$) and exponential logarithmic ($\mathcal{EL}$) distributions, introduced by Adamidis and Loukas (1998), Kus (2007) and Tahmasbi and Rezaei (2008), respectively.  Morais and Barreto-Souza (2011) defined the Weibull power series ($\mathcal{WPS}$) class of distributions which includes as sub-models the $\mathcal{EPS}$ distributions.  The $\mathcal{WPS}$ distributions can have an increasing, decreasing and upside down bathtub failure rate function. The generalized exponential power series ($\mathcal{GEPS}$) distributions were proposed by Mahmoudi and Jafari (2012) in a similar way to that found in Morais and Barreto-Souza (2011). In a very recent paper, Silva \
emph{et al}. (2012) introduced the extended Weibull power series ($\mathcal{EWPS}$) class of distributions which includes as special models the $\mathcal{EPS}$ and $\mathcal{WPS}$ distributions.

In a similar manner, we introduce the \emph{Birnbaum-Saunders power series} ($\mathcal{BSPS}$) class of distributions, which is defined by compounding the Birnbaum-Saunders and power series distributions. The main aim of this paper is to introduce a new class of distributions, which extends the Birnbaum-Saunders distribution, with the hope that the new distribution may have a 'better fit' in certain practical situations. Additionally, we will provide a comprehensive account of the mathematical properties of the proposed new class of distributions. The hazard rate function of the proposed class can be increasing and upside-down bathtub shaped. We are motivated to introduce the $\mathcal{BSPS}$ distributions because of the wide usage of~\eqref{cdfbs} and the fact that the current generalization provides means of its continuous extension to still more complex situations.

This paper is unfolds as follows. In Section 2, a new class of distributions is obtained by mixing the $\mathcal{BS}$ and zero truncated power series distributions. The mixing procedure was previously carried out by Morais and Barreto-Souza (2011) and Silva \emph{et al}. (2012). In Section 3, some properties of the new class are discussed. In Section 4, the estimation of the model parameters is performed by the method of maximum likelihood. In Section 5, we introduce and study three special cases of the $\mathcal{BSPS}$ class of distributions. In Section 6, two illustrative applications based on real data are provided. Finally, concluding remarks are presented in Section 7.

\section{The new class}
Let $T_1, \ldots, T_N$ be independent and identically distributed (iid) $\mathcal{BS}$ random variables with cdf~\eqref{cdfbs} and pdf~\eqref{pdfbs}. We assume that the random variable $N$ has a zero truncated power series distribution with probability mass function
\begin{equation}\label{powerseries}
p_n = P(N=n)=\frac{a_n \, \theta^n}{C(\theta)},\,n=1,2,\ldots,
\end{equation}
where $a_n$ depends only on $n$, $C(\theta) = \sum_{n=1}^{\infty}a_n \,\theta^n$ and $\theta>0$ is such that $C(\theta)$ is finite. Table \ref{table1}, reported in Morais and Barreto-Souza (2011), summarizes some power series distributions (truncated at zero) defined according to (\ref{powerseries}) such as the Poisson, logarithmic, geometric and binomial distributions.

By assuming that the random variables $N$ and $T$ are independent, we define $X=\min(T_1, \ldots, T_N)$. Then,
\begin{equation*}
P(X\leq x, N=n) = \frac{a_n\, \theta^n}{C(\theta)}\left[1-(1-\Phi(\upsilon))^n\right], \quad x>0,
\quad n \geq 1.
\end{equation*}

The $\mathcal{BSPS}$ class of distributions is defined by the marginal cdf of $X$:
\begin{equation}\label{cdf}
F_{\mathcal{BSPS}}(x; \theta, \alpha, \beta) = 1-\frac{C\left[\theta(1-\Phi(\upsilon))\right]}{C(\theta)}, \quad x>0.
\end{equation}

\begin{table}[!htbp]
\centering
%\footnotesize
\scalebox{0.85}[0.85]{
\begin{tabular}{llllllllllll}
 \toprule
Distribution && $a_n$&& $C(\theta)$ && $C'(\theta)$&&$C''(\theta)$&&$C(\theta)^{-1}$&$\Theta$\\
 \midrule
Poisson&&$n!^{-1}$ && $\mathrm{e}^{\theta}-1$&&$e^{\theta}$&&$e^{\theta}$&&$\log(\theta+1)$&$\theta \in (0, \infty)$\\
&&&&&\\
Logarithmic&&$n^{-1}$&&$-\log(1-\theta)$&&$(1-\theta)^{-1}$&&$(1-\theta)^{-2}$&&$1-\mathrm{e}^{-\theta}$&$\theta \in (0,1)$\\
&&&&&\\
Geometric&&1&&$\theta(1-\theta)^{-1}$&&$(1-\theta)^{-2}$&&$2(1-\theta)^{-3}$&&$\theta(\theta+1)^{-1}$&$\theta \in (0,1)$\\
&&&&&\\
Binomial&&$\binom{m}{n}$&&$(\theta+1)^{m} -
1$&&$m(\theta+1)^{m-1}$&&$\dfrac{m(m-1)}{(\theta+1)^{2-m}}$&&$(\theta-1)^{1/m}-1$&$\theta \in (0, 1)$\\
\bottomrule
\end{tabular}}
\caption{Useful quantities for some power series distributions.}\label{table1}
\end{table}

Hereafter, the random variable $X$ following~\eqref{cdf} with parameters $\theta, \alpha$ and $\beta$ is denoted by $X\sim \mathcal{BSPS}(\theta,\alpha, \beta)$.
The pdf to~\eqref{cdf} is
\begin{equation}\label{pdf}
 f_{\mathcal{BSPS}}(x; \theta, \alpha, \beta) = \theta f_{\mathcal{BS}}(x) \frac{C'\left[\theta(1-\Phi(\upsilon))\right]}{C(\theta)}, \quad x>0,
\end{equation}
where $f_{\mathcal{BS}}(\cdot)$ is given in (\ref{pdfbs}).

\begin{proposition}
The classical $\mathcal{BS}$ distribution with parameters $\alpha$ and $\beta$ is a limiting special case of the proposed model when $\theta \rightarrow 0$.
\end{proposition}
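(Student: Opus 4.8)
The plan is to show that the cdf in~\eqref{cdf} converges pointwise to the $\mathcal{BS}$ cdf in~\eqref{cdfbs} as $\theta \to 0^+$. First I would fix $x > 0$ and abbreviate $w = 1 - \Phi(\upsilon) \in (0,1)$, so that the object to control is the ratio $C(\theta w)/C(\theta)$. Using the defining series $C(\theta) = \sum_{n \geq 1} a_n \theta^n$, both numerator and denominator vanish as $\theta \to 0^+$, producing a $0/0$ indeterminacy whose resolution is the heart of the argument.

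The key step is to extract the leading-order behaviour. Because the power series is truncated at zero, its lowest-order term is the $n = 1$ term; dividing numerator and denominator by $\theta$ gives
\begin{equation*}
\frac{C(\theta w)}{C(\theta)} = \frac{a_1 w + a_2 w^2 \theta + a_3 w^3 \theta^2 + \cdots}{a_1 + a_2 \theta + a_3 \theta^2 + \cdots}.
\end{equation*}
Provided $a_1 > 0$ (which holds for all the standard choices in Table~\ref{table1}: $a_1 = 1$ for the Poisson, logarithmic and geometric cases, and $a_1 = m$ for the binomial case), every term beyond the first in each series carries a positive power of $\theta$ and therefore drops out in the limit. Hence the numerator tends to $a_1 w$, the denominator to $a_1$, and $C(\theta w)/C(\theta) \to w = 1 - \Phi(\upsilon)$ as $\theta \to 0^+$.

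Substituting this limit back into~\eqref{cdf} yields
\begin{equation*}
\lim_{\theta \to 0^+} F_{\mathcal{BSPS}}(x; \theta, \alpha, \beta) = 1 - \big(1 - \Phi(\upsilon)\big) = \Phi(\upsilon) = F_{\mathcal{BS}}(x; \alpha, \beta),
\end{equation*}
which is exactly~\eqref{cdfbs}, so convergence of the cdfs establishes the claim. The main technical obstacle is justifying that the ratio of two convergent series behaves like the ratio of their leading terms; I expect this to follow cleanly from the absolute convergence of $C$ on a neighbourhood of the origin, which lets one bound the tail of each series and show it is $O(\theta)$. An equivalent and shorter route applies L'Hôpital's rule to the $0/0$ form: differentiating in $\theta$ turns the ratio into $w\,C'(\theta w)/C'(\theta)$, and since $C'(\theta) \to a_1$ as $\theta \to 0^+$ this again tends to $w$, giving the same conclusion with less bookkeeping.
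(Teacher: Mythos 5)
Your proposal is correct and follows essentially the same route as the paper: both divide the series $C\left[\theta(1-\Phi(\upsilon))\right]$ and $C(\theta)$ by their leading term in $\theta$ and observe that all higher-order terms vanish as $\theta \rightarrow 0^{+}$, so the ratio tends to $1-\Phi(\upsilon)$ and the cdf tends to $\Phi(\upsilon)$. Your explicit remarks on the requirement $a_{1}>0$ and on justifying the term-by-term limit are details the paper leaves implicit, but they do not change the argument.
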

\begin{proof}
For $x>0$, we have
\begin{align*}
\lim_{\theta \rightarrow 0^+} F_{\mathcal{BSPS}}(x) &= 1 - \lim_{\theta \rightarrow 0^+} \frac{\displaystyle{\sum_{n=1}^\infty a_n\left[\theta(1-\Phi(\upsilon))\right]^n}}{\displaystyle{\sum_{n=1}^\infty a_n\, \theta^n}}\\
&= 1 - \lim_{\theta \rightarrow 0^+} \frac{\displaystyle{1-\Phi(\upsilon) + a_1^{-1} \sum_{n=2}^\infty a_n \,\theta^{n-1}(1-\Phi(\upsilon))^n}}{\displaystyle{1 + a_1^{-1}\sum_{n=2}^\infty a_n\, \theta^{n-1}}}=\Phi(\upsilon).
\end{align*}
\end{proof}

We now derive two useful expansions for the density function~\eqref{pdf}. We have $C'(\theta) = \sum_{n=1}^{\infty}n\, a_n\,\theta^{n-1}$ and then
\begin{eqnarray}\label{expansion1}
 f_{\mathcal{BSPS}}(t; \theta, \alpha, \beta) &=& f_{\mathcal{BS}}(t; \alpha, \beta) \sum_{n=1}^{\infty}n \frac{a_n\,\theta^{n}}{C(\theta)}[1 - \Phi(\upsilon)]^{n-1}.
\end{eqnarray}

By using the binomial theorem in equation (\ref{expansion1}), we can write
\begin{eqnarray}\label{expansion2}
 f_{\mathcal{BSPS}}(x; \theta, \alpha, \beta) &=& f_{\mathcal{BS}}(x; \alpha, \beta) \sum_{n=1}^{\infty} \sum_{k=0}^{n-1} \omega_{n,k}\Phi(\upsilon)^k,
\end{eqnarray}
where $\omega_{n,k} = (-1)^k\,n\,p_n  \dbinom{n-1}{k}$.

\begin{proposition} The density function of $X$ can be expressed as an infinite linear combination of densities of minimum order statistics of $T$. \end{proposition}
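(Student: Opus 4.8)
The plan is to read the already-derived expansion~\eqref{expansion1} as the desired mixture, so that the proof reduces to identifying each summand with a minimum-order-statistic density weighted by the power series mass function~\eqref{powerseries}. The work is almost entirely one of recognition rather than fresh computation.

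First I would recall the elementary fact that, if $T_1,\ldots,T_n$ are iid with cdf $F_{\mathcal{BS}}$ and pdf $f_{\mathcal{BS}}$, then the minimum $T_{(1:n)}=\min(T_1,\ldots,T_n)$ has survival function $[1-F_{\mathcal{BS}}]^n$ and hence density
\[
f_{T_{(1:n)}}(t) = n\,f_{\mathcal{BS}}(t)\,\bigl[1-F_{\mathcal{BS}}(t)\bigr]^{n-1} = n\,f_{\mathcal{BS}}(t)\,\bigl[1-\Phi(\upsilon)\bigr]^{n-1},
\]
where the last equality uses $F_{\mathcal{BS}}(t)=\Phi(\upsilon)$ from~\eqref{cdfbs}. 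Next I would return to~\eqref{expansion1} and, writing $p_n=a_n\theta^n/C(\theta)$ as in~\eqref{powerseries}, rearrange the summand so that the factor $n$ is absorbed into the order-statistic density rather than into the weight. This gives
\[
f_{\mathcal{BSPS}}(t) = \sum_{n=1}^{\infty} p_n\,\bigl[\,n\,f_{\mathcal{BS}}(t)\,[1-\Phi(\upsilon)]^{n-1}\,\bigr] = \sum_{n=1}^{\infty} p_n\,f_{T_{(1:n)}}(t),
\]
which is exactly an infinite linear combination of densities of minimum order statistics of $T$. Since $\sum_{n\ge1}p_n=1$, the combination is in fact a proper mixture, which I would note as a remark.

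There is no genuine obstacle in this argument. The only point demanding a little care is the bookkeeping: one must match the summation index $n$ in~\eqref{expansion1} with the sample size of the corresponding minimum order statistic, and make sure the leading factor $n$ is attached to $f_{T_{(1:n)}}$ so that the remaining coefficient is precisely the probability $p_n$. Once this is arranged the stated representation is immediate, and no convergence issues beyond those already implicit in the definition of $C(\theta)$ need to be addressed.
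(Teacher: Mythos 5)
Your proposal is correct and follows essentially the same route as the paper: both start from the series expansion $C'(\theta)=\sum_{n\ge1} n\,a_n\,\theta^{n-1}$ (equivalently, expansion~\eqref{expansion1}) and identify each summand as $p_n$ times the density $n\,f_{\mathcal{BS}}(t)\,[1-\Phi(\upsilon)]^{n-1}$ of $\min(T_1,\ldots,T_n)$. Your added remark that $\sum_{n\ge1}p_n=1$ makes the representation a proper mixture is a small but worthwhile refinement the paper leaves implicit.
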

\begin{proof}
We know that $C'(\theta) = \sum_{n=1}^{\infty}n\, a_n\,\theta^{n-1}$. Therefore,
\begin{equation*}\label{ordrpdf}
f_{\mathcal{BSPS}}(x; \theta, \alpha, \beta) = \sum\limits_{n=1}^{\infty}p_n\,f_{T_{(1)}}(x; \alpha,\beta, n), \quad x>0,
\end{equation*}
where $f_{T_{(1)}}(x; \alpha,\beta, n)$ is the density function of $T_{(1)}=\min(T_1, \ldots, T_n)$, for fixed $n$, given by
$$f_{T_{(1)}}(t;\alpha,\beta,n) = n\,f_{\mathcal{BS}}(t)[1-\Phi(v)]^{n-1}, \quad t>0.$$
\end{proof}

\begin{Remark} Let $Y = \max(T_1, \ldots, T_N)$, then the cdf and pdf of $Y$ are
\begin{equation}\label{CBSPS}
F_{Y}(y; \theta, \alpha,\beta) = \frac{C(\theta\,\Phi(v))}{C(\theta)}, \quad y>0
\end{equation}
and
$$f_{Y}(y; \theta, \alpha,\beta) = \theta f_{\mathcal{BS}}(y)\frac{C'(\theta\,\Phi(v))}{C(\theta)}, \quad y>0.$$
\end{Remark}

Note that
$$f_{Y}(y; \theta, \alpha,\beta) = \sum\limits_{n=1}^{\infty}p_n\,f_{T_{(n)}}(y; \alpha,\beta,n), \quad y>0,$$
where $f_{T_{(n)}}(y;\alpha,\beta,n)$ is the density function of $T_{(n)}=\max(T_1, \ldots, T_n)$, for fixed $n$, given by
$$f_{T_{(n)}}(t;\alpha,\beta,n) = n\,f_{\mathcal{BS}}(t)\,\Phi(v)^{n-1}, \quad t>0.$$

The distribution with cdf (\ref{CBSPS}) is called the \emph{complementary Birnbaum-Saunders power series} ($\mathcal{CBSPS}$) distribution. This class of distributions is a suitable model in a complementary risk problem based in the presence of latent risks which arise in several areas such as public health, actuarial science, biomedical studies, demography and industrial reliability (Basu and Klein, 1982). However, in this work, we do not focus on this alternative class of distributions.

The $\mathcal{BSPS}$ survival function becomes
\begin{equation*}
S_{\mathcal{BSPS}}(x; \theta, \alpha, \beta) = \frac{C\left[\theta(1-\Phi(\upsilon))\right]}{C(\theta)}, \quad x>0
\end{equation*}
and the corresponding hazard rate function reduces to
\begin{equation*}
h_{\mathcal{BSPS}}(x; \theta, \alpha, \beta) = \theta f_{\mathcal{BS}}(x) \frac{C'\left[\theta(1-\Phi(\upsilon))\right]}{C\left[\theta(1-\Phi(\upsilon))\right]}, \quad x>0.
\end{equation*}

\section{Quantiles, order statistics and moments}

The quantile function of $X$, say $Q(u)$, is given by
\begin{equation*}
 t = Q(u) = \beta\left\{\frac{\alpha\, \delta}{2} + \left[\left(\frac{\alpha\, \delta}{2}\right)^2 + 1\right]^{1/2}\right\}^2,
\end{equation*}
where $\delta = -\Phi^{-1}\left\{C^{-1}\left[(1-u)C(\theta)\right]/\theta\right\}$, $\Phi^{-1}(\cdot)$ is the inverse cumulative function of the standard normal distribution, $C^{-1}(\cdot)$ is the inverse function of $C(\cdot)$ and $u$ is a uniform random number on the unit interval $(0,1)$.

Let $X_1, \ldots, X_m$ be a random sample with density function~\eqref{pdf} and define the $i$th order statistic by $T_{i:m}$. The density function of $T_{i:m}$, say $f_{i:m}(x; \theta, \alpha, \beta)$, is given by
\begin{eqnarray*}\label{ordstat}\nonumber
 f_{i:m}(x; \theta, \alpha, \beta) &=& \frac{1}{B(i, m-i+1)}f_{\mathcal{BSPS}}(x; \theta, \alpha, \beta)F_{\mathcal{BSPS}}(x; \theta, \alpha, \beta)^{i-1}\left[1-F_{\mathcal{BSPS}}(x; \theta, \alpha, \beta)\right]^{m-i}\\
 &=& \frac{f_{\mathcal{BS}}(x; \alpha, \beta)}{B(i, m-i+1)} \sum_{k=0}^{i-1}(-1)^k \binom{i-1}{k}S_{\mathcal{BSPS}}(x; \theta, \alpha, \beta)^{m+k-i},
 \end{eqnarray*}
where $S_{\mathcal{BSPS}}(\cdot)$ is the $\mathcal{BSPS}$ survival function.

\begin{proposition}
The $\mathcal{BSPS}$ order statistics can be expressed as a linear combination of exponentiated Birnbaum-Saunders ($\mathcal{EBS}$) distributions with parameters $\alpha, \beta$ and $s+1$.
\end{proposition}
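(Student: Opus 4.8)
The plan is to start from the order-statistic density already obtained in the excerpt, namely
$$f_{i:m}(x) = \frac{f_{\mathcal{BS}}(x)}{B(i,m-i+1)}\sum_{k=0}^{i-1}(-1)^k\binom{i-1}{k}\,S_{\mathcal{BSPS}}(x)^{m+k-i},$$
and to rewrite each power $S_{\mathcal{BSPS}}(x)^{m+k-i}$ as a power series in $\Phi(\upsilon)$. Recall that an exponentiated Birnbaum-Saunders density with power parameter $s+1$ is $f_{\mathcal{EBS}}(x;\alpha,\beta,s+1)=(s+1)\,f_{\mathcal{BS}}(x)\,\Phi(\upsilon)^{s}$, so once $f_{i:m}$ is written in the form $f_{\mathcal{BS}}(x)\sum_{s\ge 0}c_s\,\Phi(\upsilon)^s$, each term $f_{\mathcal{BS}}(x)\,\Phi(\upsilon)^s$ equals $(s+1)^{-1}f_{\mathcal{EBS}}(x;\alpha,\beta,s+1)$ and the claim follows by relabelling the coefficients.

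First I would substitute $S_{\mathcal{BSPS}}(x)=C[\theta(1-\Phi(\upsilon))]/C(\theta)$ and use $C[\theta(1-\Phi(\upsilon))]=\sum_{n\ge 1}a_n\theta^n(1-\Phi(\upsilon))^n$. Writing $j=m+k-i$, the quantity $C[\theta(1-\Phi(\upsilon))]^{\,j}$ is the $j$th power of a power series in $y=1-\Phi(\upsilon)$ whose lowest-order term is $a_1\theta y$; hence it expands as $\sum_{r\ge j}\eta_{r,j}\,y^{r}$, where the coefficients $\eta_{r,j}$ are obtained by multiplying out the $j$ factors, equivalently from the standard recurrence for the coefficients of powers of a power series (with $\eta_{j,j}=(a_1\theta)^j$).

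Next I would convert powers of $y=1-\Phi(\upsilon)$ into powers of $\Phi(\upsilon)$ by the binomial theorem, $(1-\Phi(\upsilon))^r=\sum_{s=0}^r\binom{r}{s}(-1)^s\Phi(\upsilon)^s$, and interchange the order of the (now triple) summation over $k$, $r$ and $s$ so as to collect the coefficient of each $\Phi(\upsilon)^s$. Substituting back into the order-statistic formula gives $f_{i:m}(x)=f_{\mathcal{BS}}(x)\sum_{s\ge 0}c_s\,\Phi(\upsilon)^s$ for explicit constants $c_s$ built from the $\eta_{r,j}$, the factor $B(i,m-i+1)^{-1}$, the normalizer $C(\theta)^{-j}$ and the binomial weights. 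Replacing $f_{\mathcal{BS}}(x)\,\Phi(\upsilon)^s$ by $(s+1)^{-1}f_{\mathcal{EBS}}(x;\alpha,\beta,s+1)$ and setting $t_s=c_s/(s+1)$ then yields $f_{i:m}(x)=\sum_{s\ge 0}t_s\,f_{\mathcal{EBS}}(x;\alpha,\beta,s+1)$, as required.

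I expect the main obstacle to be bookkeeping rather than conceptual: organizing the coefficients $\eta_{r,j}$ arising from the $j$th power of the truncated power series and carrying them through the binomial inversion into closed-form weights $c_s$. A secondary point requiring care is the justification of the interchange of the infinite summations; this follows from absolute convergence, since $0\le\Phi(\upsilon)\le 1$ and $C(\theta)$ is finite on its admissible range, so every series in play is dominated by a convergent series of nonnegative terms.
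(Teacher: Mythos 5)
Your plan follows the paper's own derivation almost step for step: the paper likewise starts from the order-statistic density written in powers of $S_{\mathcal{BSPS}}$, expands $C\left[\theta(1-\Phi(\upsilon))\right]^{m+k-i}$ via the standard recurrence for the coefficients of powers of a power series (its coefficients $d_j$ play the role of your $\eta_{r,j}$), converts powers of $1-\Phi(\upsilon)$ into powers of $\Phi(\upsilon)$ by the binomial theorem, and identifies $f_{\mathcal{BS}}(x)\,\Phi(\upsilon)^s$ with $(s+1)^{-1}h_{s+1}(x,\alpha,\beta)$, the $\mathcal{EBS}$ density. Up to that point the two arguments coincide.

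The gap is in your last step. The paper stops at the iterated sum
\[
f_{i:m}(x)=\sum_{k=0}^{i-1}\sum_{j=0}^{\infty}\sum_{s=0}^{m+k-i+j}\omega_{k,j,s}\,h_{s+1}(x,\alpha,\beta),
\]
whose innermost sum over $s$ is \emph{finite}; this iterated sum converges because, for fixed $k$, the $j$-series is just the convergent expansion of $C\left[\theta(1-\Phi(\upsilon))\right]^{m+k-i}$, each term of which is then re-expanded by a finite binomial sum. You instead interchange the $r$- and $s$-summations so as to collect a single coefficient $c_s$ for each power $\Phi(\upsilon)^s$, and you justify this by claiming every series in play is dominated by a convergent series of nonnegative terms. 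That domination claim is false: after the binomial expansion, the absolute sum over $(r,s)$ is $\sum_{r}\eta_{r,j}\left(1+\Phi(\upsilon)\right)^{r}$, which is governed by $C\left[\theta(1+\Phi(\upsilon))\right]^{j}$, not by $C(\theta)$, and this can diverge. Concretely, in the geometric case $C(z)=z(1-z)^{-1}$ with $j=m+k-i$ one has $S_{\mathcal{BSPS}}(x)^j=\bigl[(1-\theta)(1-\Phi(\upsilon))/(1-\theta(1-\Phi(\upsilon)))\bigr]^{j}$, which, viewed as a function of $\Phi(\upsilon)$, has a pole at $\Phi(\upsilon)=-(1-\theta)/\theta$; its Taylor series in $\Phi(\upsilon)$ therefore has radius of convergence $(1-\theta)/\theta$, which is less than $1$ whenever $\theta>1/2$ (note the paper's fitted value $\widehat\theta=0.995$ in the first application). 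For such $\theta$ and for $x$ large enough that $\Phi(\upsilon)>(1-\theta)/\theta$, your single series $\sum_{s\geq 0}t_s\,f_{\mathcal{EBS}}(x;\alpha,\beta,s+1)$ diverges, so the rearranged representation fails pointwise even though each $t_s$ is finite. The repair is simply not to rearrange: leave the result as the triple sum with the finite inner binomial sum, exactly as the paper does; that is still a linear combination of $\mathcal{EBS}$ densities, merely indexed by the triples $(k,j,s)$ rather than by $s$ alone.
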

In fact,
\begin{equation*}
f_{i:m}(x; \theta, \alpha, \beta) = \sum_{k=0}^{i-1}\sum_{j=0}^{\infty}\sum_{s=0}^{m+k-i+j}\omega_{k,j,s}\, h_{s+1}(x, \alpha, \beta),
\end{equation*}
where $$\omega_{k,j,s} = \frac{(-1)^{k+s}\binom{i-1}{k}\binom{m+k+j-i}{s} d_j}{(s+1)B(i, m-i+1)C(\theta)^{m+k-i}},$$ $$d_0=a_1^{m+k-i}, \quad d_r = \frac{1}{r d_0} \sum_{j=1}^{r}\left[j(m+k-i)-r+j\right]a_{j+1} d_{r-j}$$ and $h_{s+1}(t, \alpha, \beta)$ denotes the $\mathcal{EBS}$ density function with parameters $\alpha, \beta$ and $s+1$.

Many of the important characteristics and features of a distribution are obtained through the moments. The ordinary moments of $X$ can be derived from the probability weighted moments (PWMs) (Greenwood \emph{et al}., 1979) of the $\mathcal{BS}$ distribution formally defined for $p$ and $r$ non-negative integers by
\begin{equation}\label{tau}
 \tau_{p,r} = \kappa(\alpha, \beta)\int_{0}^\infty t^{p-3/2} (t+\beta) \exp\left[-\frac{\tau(t/\beta)}{2\alpha^2}\right] \Phi^r(\upsilon) \mathrm{d}t.
\end{equation}

The integral (\ref{tau}) can be easily computed numerically in software such as MAPLE, MATHEMATICA, Ox and R, see Cordeiro and Lemonte (2011). An alternative expression for $\tau_{p,r}$ can be derived from Cordeiro and Lemonte (2011) as
\begin{align*}\label{pwmbs}
\begin{split}
\tau_{p,r}&=\frac{\beta^p}{2^r}\sum_{j=0}^r \binom{r}{j} \sum_{k_1,\ldots,k_j=0}^\infty
A(k_1,...,k_j)\sum_{m=0}^{2s_j+j}(-1)^m \binom{2s_j+j}{m}\\
&\qquad\times\beta^{-(2s_j+j-2m)/2}\,I\bigl(p+(2s_j+j-2m)/2,\alpha\bigr).
\end{split}
\end{align*}

Here, $s_j=k_1+\cdots+k_j$, $A(k_1,\ldots,k_j)=\alpha^{-2s_j-j}a_{k_1}\ldots a_{k_j}$, $a_k=(-1)^k2^{(1-2k)/2}\{\sqrt{\pi}(2k+1)k!\}^{-1}$ and
\begin{equation*}
I(p,\alpha)=\frac{K_{p+1/2}(\alpha^{-2})+K_{p-1/2}(\alpha^{-2})}{2K_{1/2}(\alpha^{-2})},
\end{equation*}
where the function $K_{\nu}(z)$ denotes the modified Bessel function of the third kind with $\nu$ representing its order and $z$ the argument.
Its integral representation is $K_{\nu}(z)=0.5 \int_{-\infty}^{\infty} \exp\{-z\cosh(t)-\nu t\}dt$.

Thus, the $s$th moment of $X$ can be obtained from equations (\ref{expansion2}) and (\ref{tau}) as
\begin{eqnarray*}
\textrm{E}(X^s) &=& \sum_{n=1}^\infty \sum_{k=0}^{n-1} \omega_{n,k} \, \tau_{s,k}.
\end{eqnarray*}

Expressions for the $s$th moments of the order statistics $X_{1:m}, \ldots, X_{m,m}$ with cumulative function~\eqref{cdf} can be obtained using a result due to Barakat and Abdelkader (2004). We have
\begin{align*}\label{omordstat}\nonumber
 \mathrm{E}(X^{s}_{i:m}) &= s \sum_{j=m-i+1}^m (-1)^{j-m+i-1}\binom{j-1}{m-i}\binom{m}{j}\int_{0}^{\infty}x^{s-1}S_{\mathcal{BSPS}}(x, \theta, \alpha, \beta)^{j}\mathrm{d}x\\
 &= s \sum_{j=m-i+1}^m \frac{(-1)^{j-m+i-1}}{C(\theta)^j}\binom{j-1}{m-i}\binom{m}{j}\int_{0}^{\infty}x^{s-1} C\left[\theta(1-\Phi(\upsilon))\right]^j \mathrm{d}x,
\end{align*}
for $i = 1, \ldots, m$.

\section{Maximum likelihood estimation}\label{mv}
Here, we discuss maximum likelihood estimation and inference for the $\mathcal{BSPS}$ distribution. Let $x_1, \ldots, x_n$ be a random sample from $X$ and let $\boldsymbol{\Theta} =(\theta, \alpha, \beta)^\top$ be the vector of the model parameters. The log-likelihood function for
$\boldsymbol{\Theta}$ reduces to
\begin{eqnarray*}\label{vero}
\ell(\theta, \alpha, \beta) &=& n\{\log(\theta) - \log[C(\theta)] + \log[\kappa(\alpha, \beta)]\} - \frac{3}{2}\sum\limits_{i=1}^{n}\log(x_i) + \sum\limits_{i=1}^{n}\log(x_i + \beta)\\
&-& \frac{1}{2\alpha^2}\sum\limits_{i=1}^{n}\tau\left(\frac{x_i}{\beta}\right) + \sum\limits_{i=1}^{n}\log\{C'\left[\theta(1-\Phi(v_i))\right]\}.\\
\end{eqnarray*}

The components of the score vector $U(\boldsymbol{\Theta})$ are given by

\begin{align*}
U_{\theta}(\boldsymbol{\Theta}) &= \frac{n}{\theta} - n\frac{C'(\theta)}{C(\theta)} + \sum_{i=1}^n \frac{1-\phi(\upsilon_i)}{C'\left[\theta(1-\Phi(\upsilon_i))\right]}\frac{\partial C'\left[\theta(1-\Phi(\upsilon_i))\right]}{\partial \theta},\\
U_{\alpha}(\boldsymbol{\Theta}) &= \frac{n}{\alpha}\left(1+\frac{2}{\alpha^2}\right) + \frac{1}{\alpha^3} \sum_{i=1}^n \tau \left(\frac{x_i}{\beta}\right) + \frac{\theta}{\alpha^3} \sum_{i=1}^n \frac{\phi(\upsilon_i)\rho(x_i/\beta)}{C'\left[\theta(1-\Phi(\upsilon_i))\right]}\frac{\partial C'\left[\theta(1-\Phi(\upsilon_i))\right]}{\partial \alpha} \\
\intertext{and}
U_{\beta}(\boldsymbol{\Theta})  &= -\frac{n}{2\beta} + \frac{1}{2\alpha^2\beta}\sum_{i=1}^n \left(\frac{x_i}{\beta}+\frac{\beta}{x_i}\right) - \frac{\theta}{2\alpha \beta}\sum_{i=1}^n \frac{\phi(\upsilon_i)\tau(\sqrt{x_i/\beta})}{C'\left[\theta(1-\Phi(\upsilon_i))\right]}\frac{\partial C'\left[\theta(1-\Phi(\upsilon_i))\right]}{\partial \beta},
\end{align*}
where $\phi(\cdot)$ is the standard normal density and $v_i = \alpha^{-1} \rho(x_i/\theta)$.

Setting these equations to zero, $U(\boldsymbol{\Theta})=0$, and solving them simultaneously yields the MLE $\widehat{\boldsymbol{\Theta}}$ of $\Theta$. These equations cannot be solved analytically and statistical software can be used to compute them numerically using iterative techniques such as the Newton-Raphson algorithm.

Often with lifetime data and reliability studies, one encounters censoring. A very simple random censoring mechanism that is often realistic is one in which each individual $i$ is assumed to have a lifetime $X_i$ and a censoring time $C_i$, where $X_i$ and $C_i$ are independent
random variables. Suppose that the data consist of $n$ independent observations $x_i = \textrm{min}(X_i, C_i)$ and $\delta_i = I(X_i \leq C_i)$ is such that $\delta_i
= 1$ if $ X_i$ is a time to event and $\delta_i = 0$ if it is right censored for $i = 1, \ldots, n$. The censored likelihood $L(\theta, \alpha, \beta)$ for the model parameters is
\begin{equation*}
L(\theta, \alpha, \beta) \propto \prod_{i=1}^{n}\,[f_{\mathcal{BSPS}}(x_i;\theta, \alpha, \beta)]^{\delta_i}\,[S_{\mathcal{BSPS}}(x_i;\theta, \alpha, \beta)]^{1-\delta_i}.
\end{equation*}

In order to perform interval estimation and hypothesis tests on the model parameters $\alpha$, $\beta$ and $\theta$, the normal approximation for the MLE can be applied. Note that under certain conditions for the parameters, the asymptotic distribution of $\sqrt{n}(\widehat{\boldsymbol{\Theta}}-\boldsymbol{\Theta})$ is
multivariate normal $N_{3}(0,I^{-1}(\boldsymbol{\Theta}))$, where $J_n(\boldsymbol{\theta})$ is the observed information matrix and $I(\boldsymbol{\Theta})=\lim_{n\rightarrow\infty}J_n(\boldsymbol{\Theta})$. Note also that an $100(1-\gamma)\%$ ($0<\gamma<1/2$) asymptotic confidence interval for the $i$th parameter $\theta_i$ in $\Theta$
is specified by
$$ACI_i=(\widehat\theta_i-z_{1-\gamma/2}\sqrt{\widehat J^{\theta_i,\theta_i}},\hat{\theta_i} +z_{1-\gamma/2}\sqrt{\widehat J^{\theta_i,\theta_i}}),$$
where $\widehat J^{\theta_i,\theta_i}$ stands for the $i$th diagonal element of the inverse of the observed information matrix
estimated at $\widehat{\boldsymbol{\Theta}}$, i.e., $J_n(\widehat{\boldsymbol{\Theta}})^{-1}$,
for $i=1,2,3$, and $z_{1-\gamma/2}$ is the $1-\gamma/2$ standard normal quantile.

For testing the goodness-of-fit of the $\mathcal{BSPS}$ distribution and for comparing it with some sub-models, we can use the likelihood ratio (LR) statistics. The LR statistic for testing the null hypothesis $\mathcal{H}_0$: $\boldsymbol{\Theta}_1 = \boldsymbol{\Theta}_{1}^{(0)}$ versus the
alternative hypothesis $\mathcal{H}_1$: $\boldsymbol{\Theta}_1 \neq
\boldsymbol{\Theta}_{1}^{(0)}$ is given by $w =
2\{\ell(\widehat{\boldsymbol{\Theta}}) -
\ell(\widetilde{\boldsymbol{\Theta}})\}$, where
$\widehat{\boldsymbol{\Theta}}$ and $\widetilde{\boldsymbol{\Theta}}$ are
the MLEs under the alternative and null hypotheses, respectively.
Under the null hypothesis, $\mathcal{H}_0$, $w{\xrightarrow{d}}\chi_{k}^{2}$, where $k$ is the dimension
of the subset $\boldsymbol{\Theta}_{1}$ of interest.

\section{Special cases}

In this section, we investigate some special models of the $\mathcal{BSPS}$ class of distributions. We offer
some expressions for the moments and moments of the order statistics. We illustrate the flexibility of these distributions and provide plots of the density and hazard rate functions for selected parameter values.

\subsection{Birnbaum-Saunders geometric distribution}
\vskip3mm
The {\it Birnbaum-Saunders geometric} ($\mathcal{BSG}$) distribution is defined by the cdf (\ref{cdf}) with $C(\theta) = \theta(1-\theta)^{-1}$ leading to
\begin{equation}\label{pdfbsg}
F_{\mathcal{BSG}}(x;\theta, \alpha, \beta) = 1 - \frac{(1-\theta)[1 - \Phi(v)]}{1-\theta[1-\Phi(v)]}, \quad x > 0,
\end{equation}
where $\theta \in (0,1)$.

\begin{proposition} The distribution of the form (\ref{pdfbsg}) is geometric minimum stable.
\end{proposition}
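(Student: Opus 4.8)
The plan is to establish geometric-min stability directly from its definition: if $X_1,X_2,\ldots$ are iid $\mathcal{BSG}(\theta,\alpha,\beta)$ and $M$ is an independent geometric count, then $Z=\min(X_1,\ldots,X_M)$ should again follow a $\mathcal{BSG}$ law with the same $\alpha,\beta$ and a suitably recalibrated $\theta$. Since everything is governed by survival functions, the proposition reduces to a single algebraic identity, and the task is to set it up correctly and simplify.

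First I would fix notation and record the survival function of the $\mathcal{BSG}$ model. Writing $s=1-\Phi(v)$ for the underlying $\mathcal{BS}$ survival function (with $v=\alpha^{-1}\rho(x/\beta)$), equation (\ref{pdfbsg}) gives
\[
S_{\mathcal{BSG}}(x;\theta)=\frac{(1-\theta)\,s}{1-\theta\,s}.
\]
Let $M$ be geometric with $P(M=m)=(1-q)\,q^{m-1}$, $m\ge 1$, $q\in(0,1)$, independent of the $X_i$. Using $P(\min(X_1,\ldots,X_m)>x)=S_{\mathcal{BSG}}(x;\theta)^m$ and summing the resulting geometric series (which converges since $0\le S_{\mathcal{BSG}}\le 1$ and $q<1$), I obtain
\[
S_Z(x)=\sum_{m=1}^\infty (1-q)\,q^{m-1}\,S_{\mathcal{BSG}}(x;\theta)^m=\frac{(1-q)\,S_{\mathcal{BSG}}(x;\theta)}{1-q\,S_{\mathcal{BSG}}(x;\theta)}.
\]

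Next I would substitute $S_{\mathcal{BSG}}(x;\theta)=(1-\theta)s/(1-\theta s)$ into this expression, clear the inner denominator $1-\theta s$, and collect the coefficient of $s$. A short computation yields
\[
S_Z(x)=\frac{(1-q)(1-\theta)\,s}{1-\bigl[\,\theta+q(1-\theta)\,\bigr]s}=\frac{(1-\theta^\star)\,s}{1-\theta^\star\,s},\qquad \theta^\star:=1-(1-q)(1-\theta),
\]
since $\theta+q(1-\theta)=1-(1-q)(1-\theta)=\theta^\star$ and $(1-q)(1-\theta)=1-\theta^\star$. The right-hand side is exactly $S_{\mathcal{BSG}}(x;\theta^\star)$, so $Z\sim\mathcal{BSG}(\theta^\star,\alpha,\beta)$; note that the transformation acts multiplicatively on the complementary parameter, $1-\theta^\star=(1-q)(1-\theta)$.

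Finally I would check admissibility: because $\theta,q\in(0,1)$ both factors $1-q$ and $1-\theta$ lie in $(0,1)$, hence $1-\theta^\star=(1-q)(1-\theta)\in(0,1)$ and $\theta^\star\in(0,1)$, so the minimum remains inside the $\mathcal{BSG}$ family. I do not anticipate a genuine obstacle: the statement is really the closure of the M\"obius-type map $s\mapsto (1-\theta)s/(1-\theta s)$ (fixing $0$ and $1$) under geometric compounding, and the only points demanding care are the bookkeeping between the two geometric parameters $q$ and $\theta$, the convergence of the series, and the verification that $\theta^\star$ stays in the unit interval.
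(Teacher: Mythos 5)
Your proof is correct, and it is considerably more explicit than the paper's: the authors do not carry out any computation at all, but simply remark that the result ``follows easily of the arguments given by Marshall and Olkin (1997, p.~647)'' and omit the details. The point behind that citation is that the $\mathcal{BSG}$ survival function $\frac{(1-\theta)s}{1-\theta s}$, with $s=1-\Phi(v)$, is exactly the Marshall--Olkin transform of the $\mathcal{BS}$ survival function with parameter $\gamma=1-\theta$, and Marshall and Olkin proved that their semiparametric family is geometric-extreme stable (closed under both geometric minima and geometric maxima). Your argument supplies precisely the verification they leave out: geometric compounding of survival functions, substitution into the M\"obius form, and the identification $1-\theta^\star=(1-q)(1-\theta)$, which is the multiplicative action on $\gamma$ that drives Marshall and Olkin's general result. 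What your route buys is a self-contained proof with the explicit parameter map and the admissibility check $\theta^\star\in(0,1)$; what the paper's route buys is brevity and, as a free by-product of the cited theorem, the analogous maximum-stability statement (relevant to the complementary $\mathcal{CBSPS}$ class mentioned in Section~2), which your minimum-only computation does not address but which the proposition does not require either.
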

\begin{proof} The proof follows easily of the arguments given by Marshall and Olkin (1997, p. 647). We omit the details.
\end{proof}

The associated density and hazard rate functions are
\begin{equation*}
f_{\mathcal{BSG}}(x;\theta, \alpha, \beta)  = (1 - \theta)f_{\mathcal{BS}}(x)\left[1 - \theta(1-\Phi(v))\right]^{-2}
\end{equation*}
and
\begin{equation}\label{hBSG}
h_{\mathcal{BSG}}(x;\theta, \alpha, \beta) = \frac{f_{\mathcal{BS}}(x)}{[1-\Phi(v)]\{1 - \theta[1-\Phi(v)]\}} = \frac{h_{\mathcal{BS}}(x)}{1 - \theta[1-\Phi(v)]},
\end{equation}
for $x > 0$, respectively. From equation (\ref{hBSG}), we note that $h_{\mathcal{BSG}}(x;\theta, \alpha, \beta)/h_{\mathcal{BS}}(x;\alpha, \beta)$ is decreasing in $x$.
It can also be shown that $$\lim\limits_{x \rightarrow 0} h_{\mathcal{BSG}}(x;\theta, \alpha, \beta) = 0 \,\,\,\,\, \mathrm{and} \,\,\,\,\, \lim\limits_{x \rightarrow \infty} h_{\mathcal{BSG}}(x;\theta, \alpha, \beta) = \frac{1}{2\alpha^2\beta}.$$

\begin{Corollary} Let $X \sim \mathcal{BSG}(\theta,\alpha,\beta)$ and $T \sim\mathcal{BS}(\alpha,\beta)$. Then $h_{\mathcal{BSG}}(x;\theta,\alpha,\beta) \geq h_{\mathcal{BS}}(t;\alpha,\beta)$ for all $\theta \in (0,1)$.
\end{Corollary}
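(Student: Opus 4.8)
The plan is to read the inequality directly off the closed form of the $\mathcal{BSG}$ hazard rate already recorded in equation (\ref{hBSG}), which exhibits $h_{\mathcal{BSG}}$ as $h_{\mathcal{BS}}$ divided by the factor $1-\theta[1-\Phi(v)]$. Interpreting the statement pointwise (so that $t=x$, since comparing two hazard rates of the same shape parameters is only meaningful at a common argument), it suffices to show that this denominator never exceeds one while remaining strictly positive. No new computation with the density (\ref{pdfbs}) is required; everything reduces to bounding the denominator.

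First I would bound the factor $1-\Phi(v)$. For every $x>0$ the argument $v=\alpha^{-1}\rho(x/\beta)$ is finite, and since $\Phi$ is the standard normal cdf we have $0<\Phi(v)<1$, hence $0<1-\Phi(v)<1$. Combining this with $\theta\in(0,1)$ gives $0<\theta[1-\Phi(v)]<\theta<1$, so the denominator obeys $1-\theta<1-\theta[1-\Phi(v)]<1$; in particular it lies in the open interval $(0,1)$.

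The conclusion then follows immediately: the nonnegative quantity $h_{\mathcal{BS}}(x)$ divided by a number in $(0,1)$ is at least $h_{\mathcal{BS}}(x)$, so $h_{\mathcal{BSG}}(x;\theta,\alpha,\beta)\ge h_{\mathcal{BS}}(x;\alpha,\beta)$ for all $\theta\in(0,1)$ and all $x>0$. There is essentially no obstacle here; the only point that needs care is confirming strict positivity of the denominator, so that the ratio is well defined and the inequality is genuine rather than vacuous, and this is exactly what the strict bound $\theta[1-\Phi(v)]<1$ supplies. As a consistency check, the inequality degenerates to equality only in the limits $\Phi(v)\to 1$ (that is, $x\to\infty$) or $\theta\to 0$, which matches the earlier observation that $h_{\mathcal{BSG}}/h_{\mathcal{BS}}$ is decreasing in $x$ and the limiting result recovering the $\mathcal{BS}$ model as $\theta\to 0$.
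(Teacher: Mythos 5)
Your proof is correct and is exactly the argument the paper has in mind: the paper's own ``proof'' is just the assertion that it is straightforward, and your reading of equation (\ref{hBSG}) together with the bound $0<1-\theta[1-\Phi(v)]<1$ for $\theta\in(0,1)$ supplies precisely the missing details (including the sensible pointwise interpretation $t=x$).
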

\begin{proof} It is straightforward.
\end{proof}

Cancho \emph{et al}. (2012) proposed and studied the geometric Birnbaum-Saunders regression with cure rate. However, they considered $C(\theta) = (1 - \theta)^{-1}$ instead of $C(\theta) = \theta (1 - \theta)^{-1}$. In Figure \ref{figBSG}, we plot the density and hazard rate functions of the $\mathcal{BSG}$ distribution for selected parameter values.  We can verify that this distribution has an upside-down bathtub or an increasing failure rate function depending on the values of its parameters.

\begin{figure}[!htbp]
\centering
\subfigure[$\alpha = 0.2$ and $\beta = 5.0$]{\includegraphics[scale=0.5]{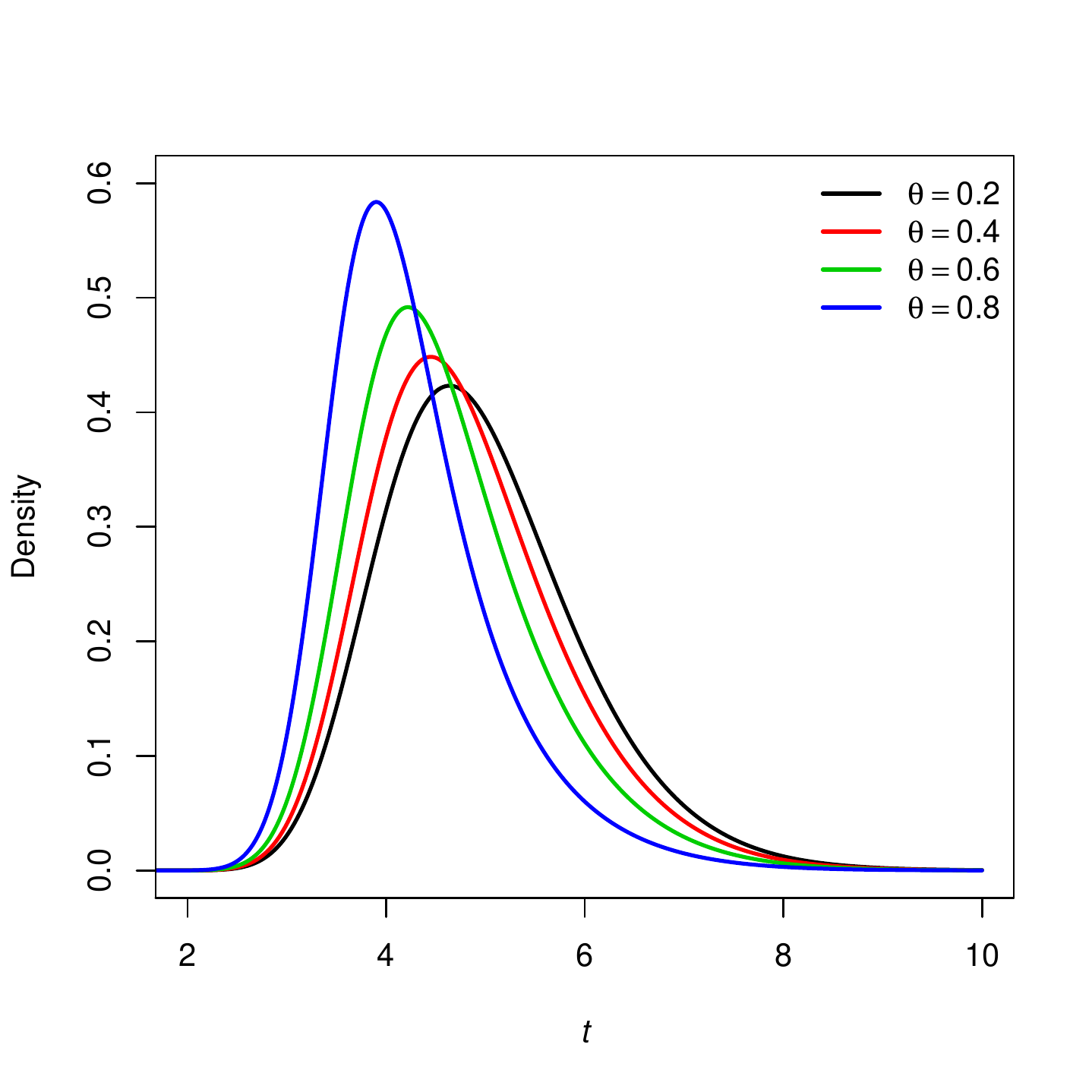}}
\subfigure[$\alpha = 2.0$ and $\beta = 5.0$]{\includegraphics[scale=0.5]{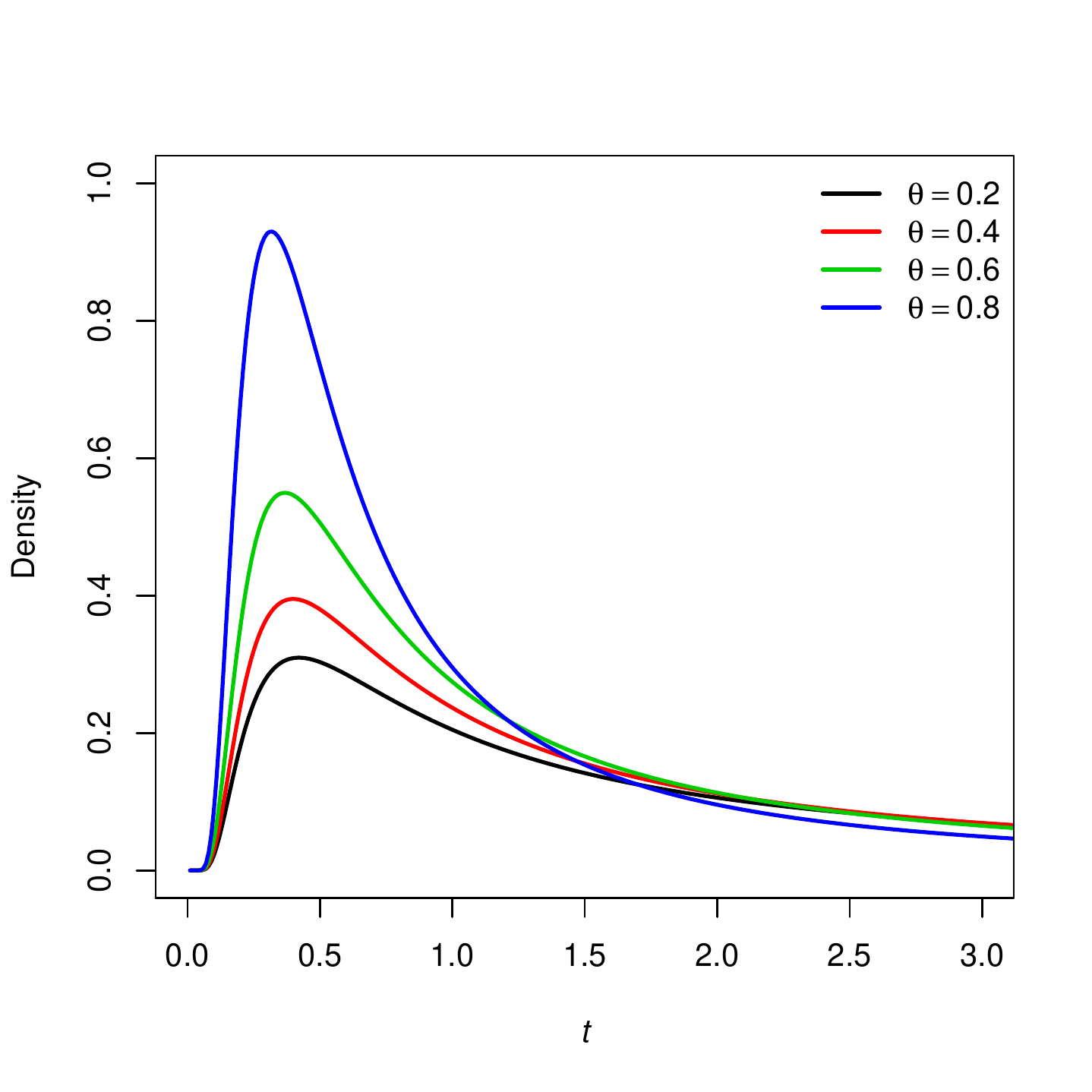}}
\subfigure[$\alpha = 0.2$ and $\beta = 5.0$]{\includegraphics[scale=0.5]{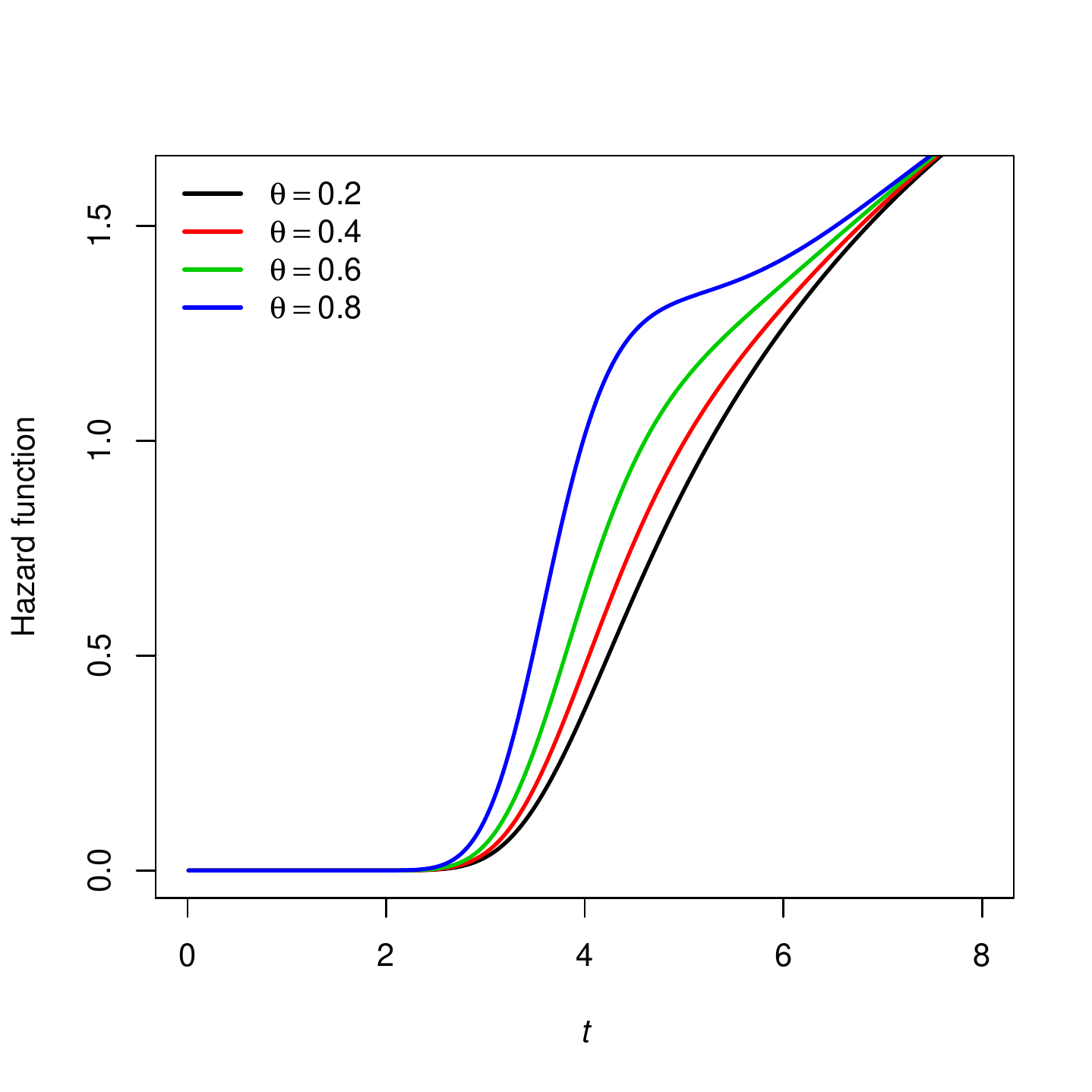}}
\subfigure[$\alpha = 2.0$ and $\beta = 5.0$]{\includegraphics[scale=0.5]{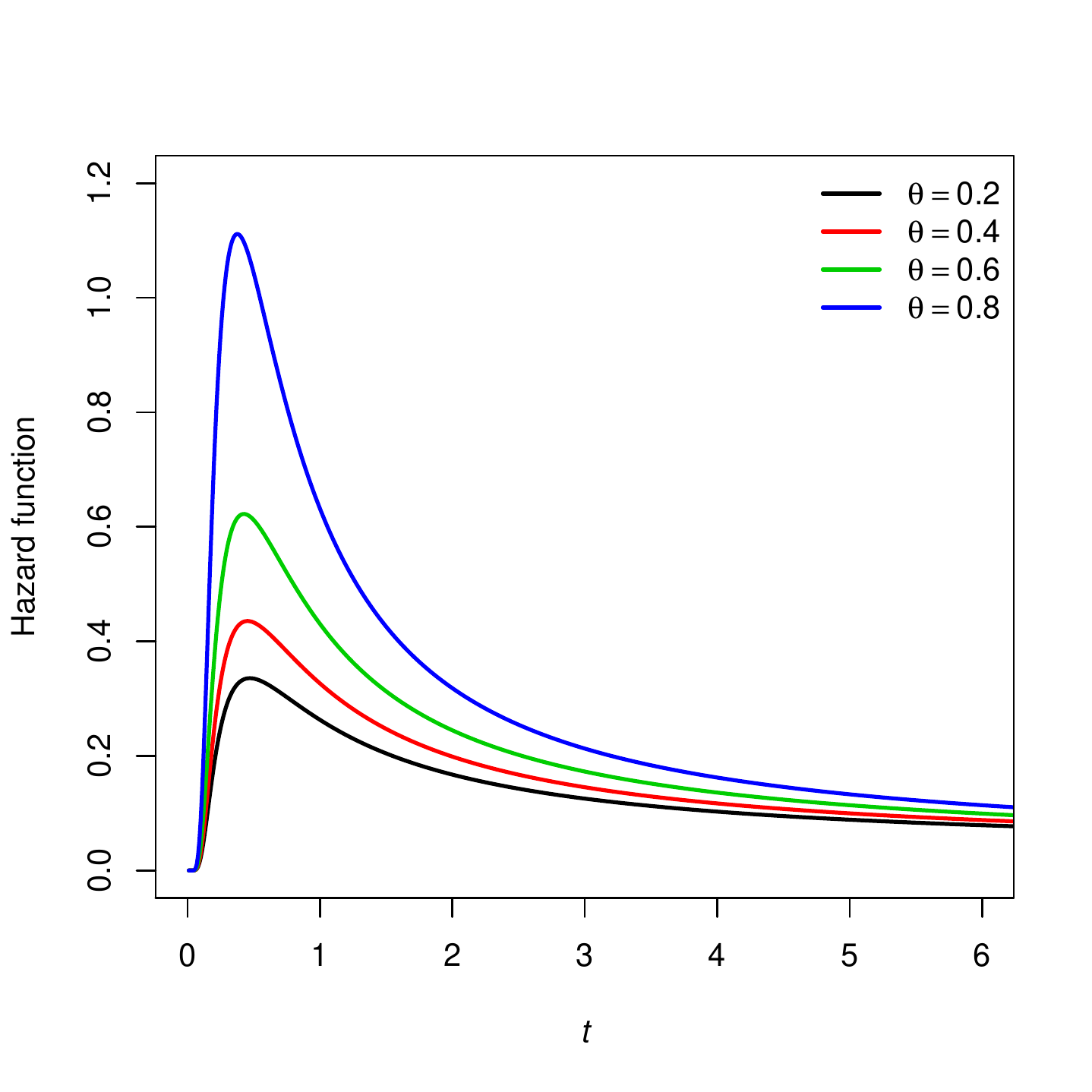}}
\caption{Plots of the $\mathcal{BSG}$ density and hazard rate functions for some parameter values.}
\label{figBSG}
\end{figure}

Expressions for the density function and the moments of the order statistics $X_{1:m}, \ldots, X_{m:m}$ from a random sample of the $\mathcal{BSG}$ distribution are given by
\begin{equation*}
f_{i:m}(x) = \frac{f_{\mathcal{BS}}(x; \alpha, \beta)}{B(i, m-i+1)} \sum_{k=0}^{i-1} (-1)^k \binom{i-1}{k} \left[\frac{(1-\theta)\Phi(-\upsilon)}{1-\theta\Phi(-\upsilon)}\right]^{m+k-i}
\end{equation*}
and
\begin{align*}
 \mathrm{E}(X^{s}_{i:m}) &= s \sum_{j=m-i+1}^m \frac{(-1)^{j-m+i-1}}{\theta^j(1-\theta)^{-j}}\binom{j-1}{m-i}\binom{m}{j}\int_{0}^{\infty}x^{s-1} \left[1-\theta\Phi(-\upsilon)\right]^j \mathrm{d}x,
\end{align*}
which are easily determined numerically.

Let $x_1, \cdots, x_n$ be a random sample of size $n$ from $X\sim \mathcal{BSG}(\delta,\alpha,\beta)$. The log-likelihood function for the vector of parameters $\boldsymbol{\Theta} = (\theta,\alpha, \beta)^\top$ can be expressed as
\begin{eqnarray*}
\ell(\boldsymbol{\Theta}) &=& n\{\log(1-\theta) + \log[\kappa(\alpha, \beta)]\} - \frac{3}{2}\sum\limits_{i=1}^{n}\log(x_i) + \sum\limits_{i=1}^{n}\log(x_i + \beta)\\
&-& \frac{1}{2\alpha^2}\sum\limits_{i=1}^{n}\tau\left(\frac{x_i}{\beta}\right) - 2\sum\limits_{i=1}^{n}\log[1 - \theta(1-\Phi(v_i))].
\end{eqnarray*}

\subsection{Birnbaum-Saunders Poisson distribution}

The \textit{Birnbaum-Saunders Poisson} ($\mathcal{BSP}$) distribution follows by taking $C(\theta) = \mathrm{e}^{\theta}-1$ in~\eqref{cdf}, which yields

\begin{equation*}
F_{\mathcal{BSP}}(t;\theta, \alpha, \beta) = 1 - \frac{\exp\{\theta(1-\Phi(v))\}-1}{\exp(\theta)-1}.
\end{equation*}

The density and hazard rate functions of the $\mathcal{BSP}$ distribution are
\begin{equation*}
f_{\mathcal{BSP}}(t;\theta, \alpha, \beta) = \frac{\theta f_{\mathcal{BS}}(t)\exp[\theta(1-\Phi(v))]}{\exp(\theta)-1}
\end{equation*}
and
\begin{equation*}
h_{\mathcal{BSP}}(t;\theta, \alpha, \beta) = \frac{\theta f_{\mathcal{BS}}(t)\exp\{\theta(1-\Phi(v))\}}{\exp\{\theta(1-\Phi(v))\}-1}.
\end{equation*}

\begin{Remark} The limit of the $\mathcal{BSP}$ hazard rate function as $t \rightarrow 0$ is $0$.
\end{Remark}

In Figure \ref{figBSP}, we plot the density and hazard rate functions of the $\mathcal{BSP}$ distribution for selected parameter values. This distribution can have an upside-down bathtub or an increasing failure rate function depending on the values of its parameters.

\begin{figure}[!htbp]
\centering
\subfigure[$\alpha = 0.2$ and $\beta = 1.0$]{\includegraphics[scale=0.5]{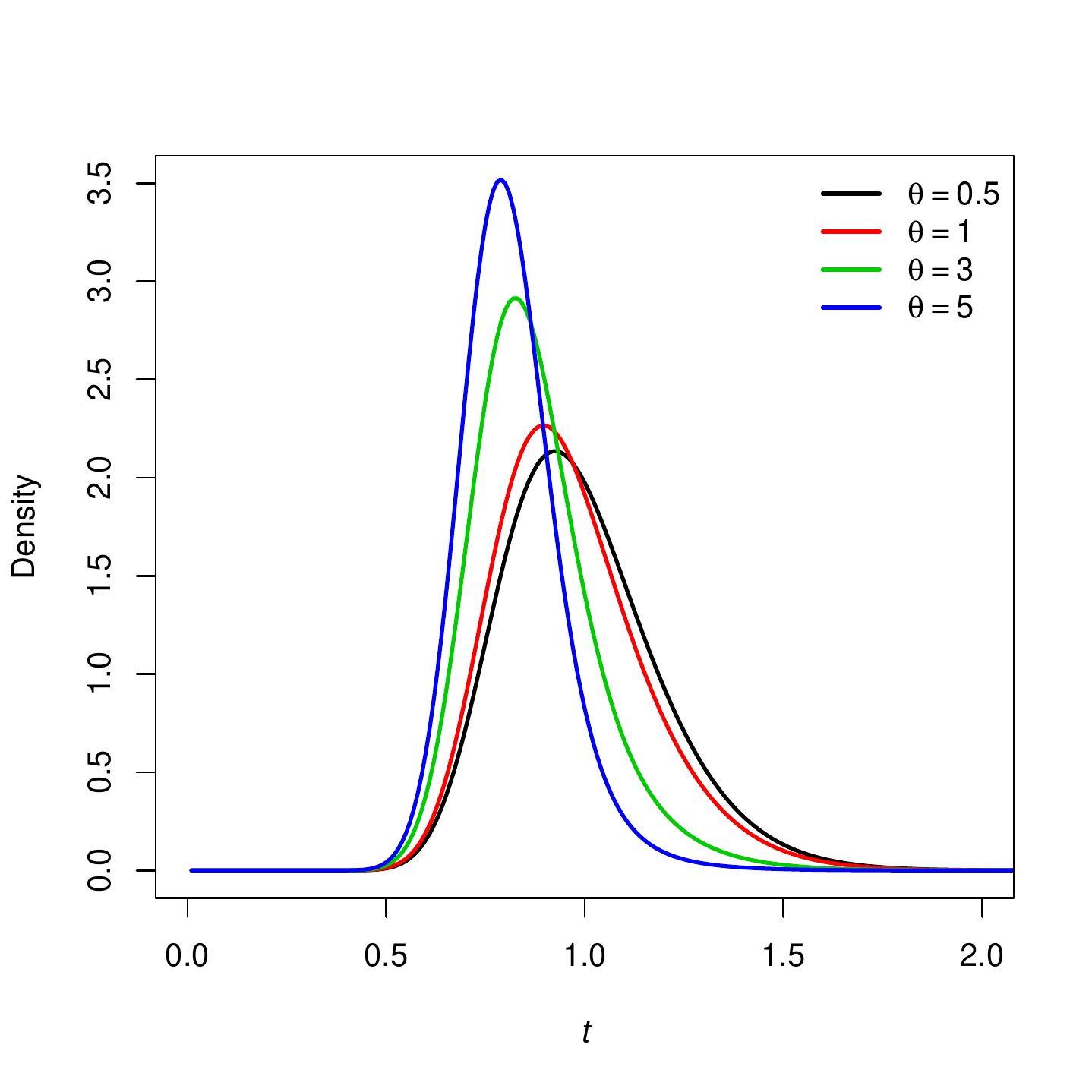}}
\subfigure[$\alpha = 2.0$ and $\beta = 1.0$]{\includegraphics[scale=0.5]{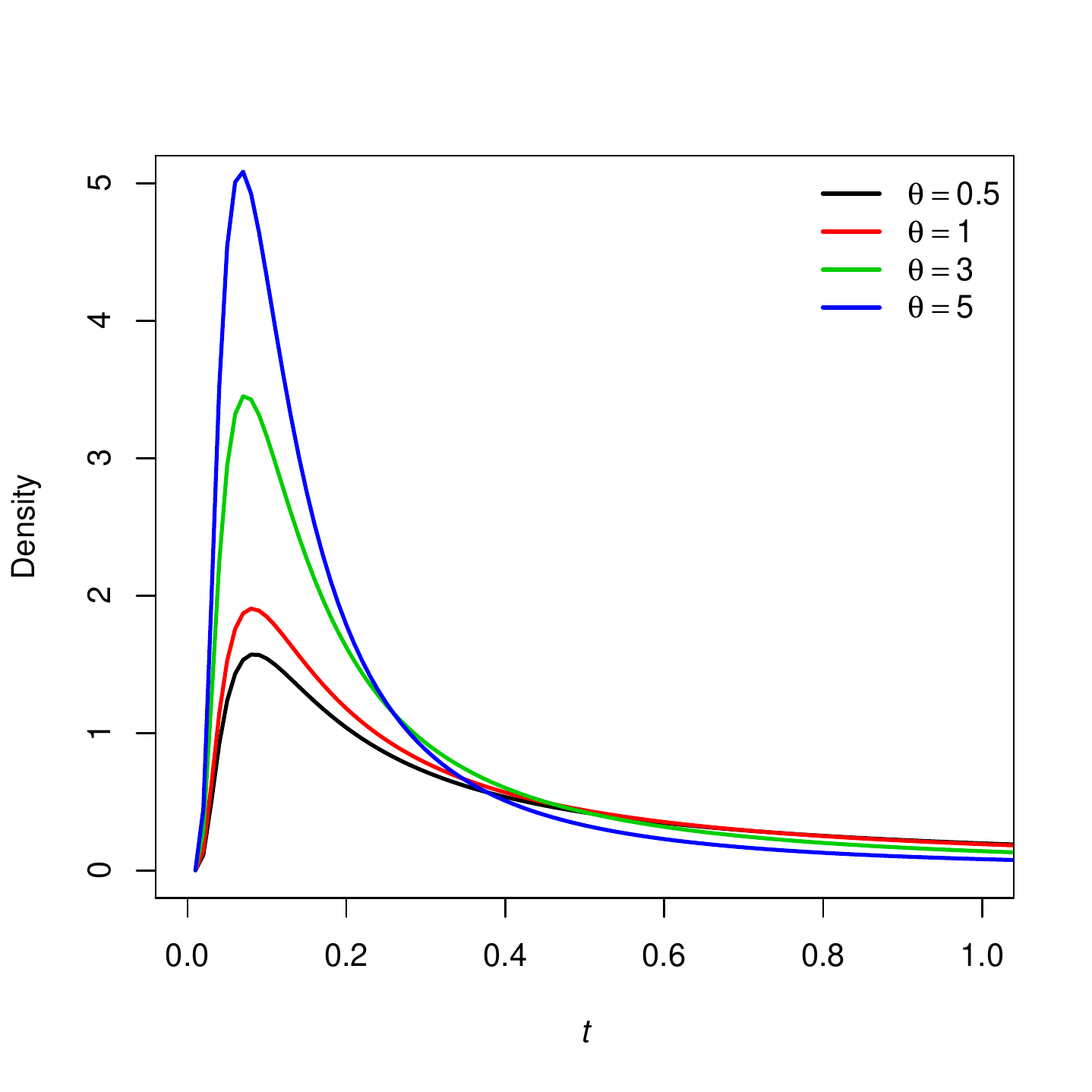}}
\subfigure[$\alpha = 0.2$ and $\beta = 1.0$]{\includegraphics[scale=0.5]{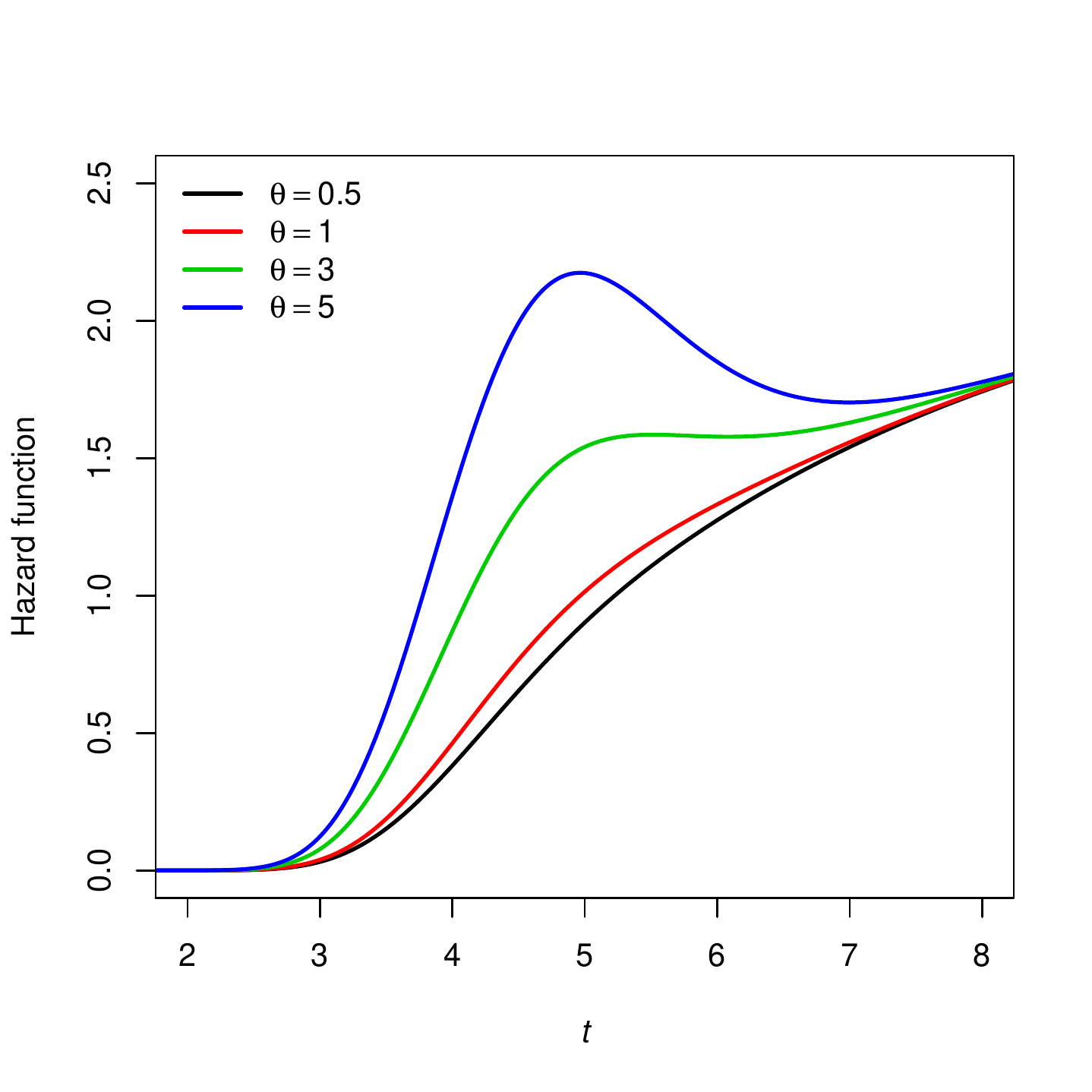}}
\subfigure[$\alpha = 2.0$ and $\beta = 1.0$]{\includegraphics[scale=0.5]{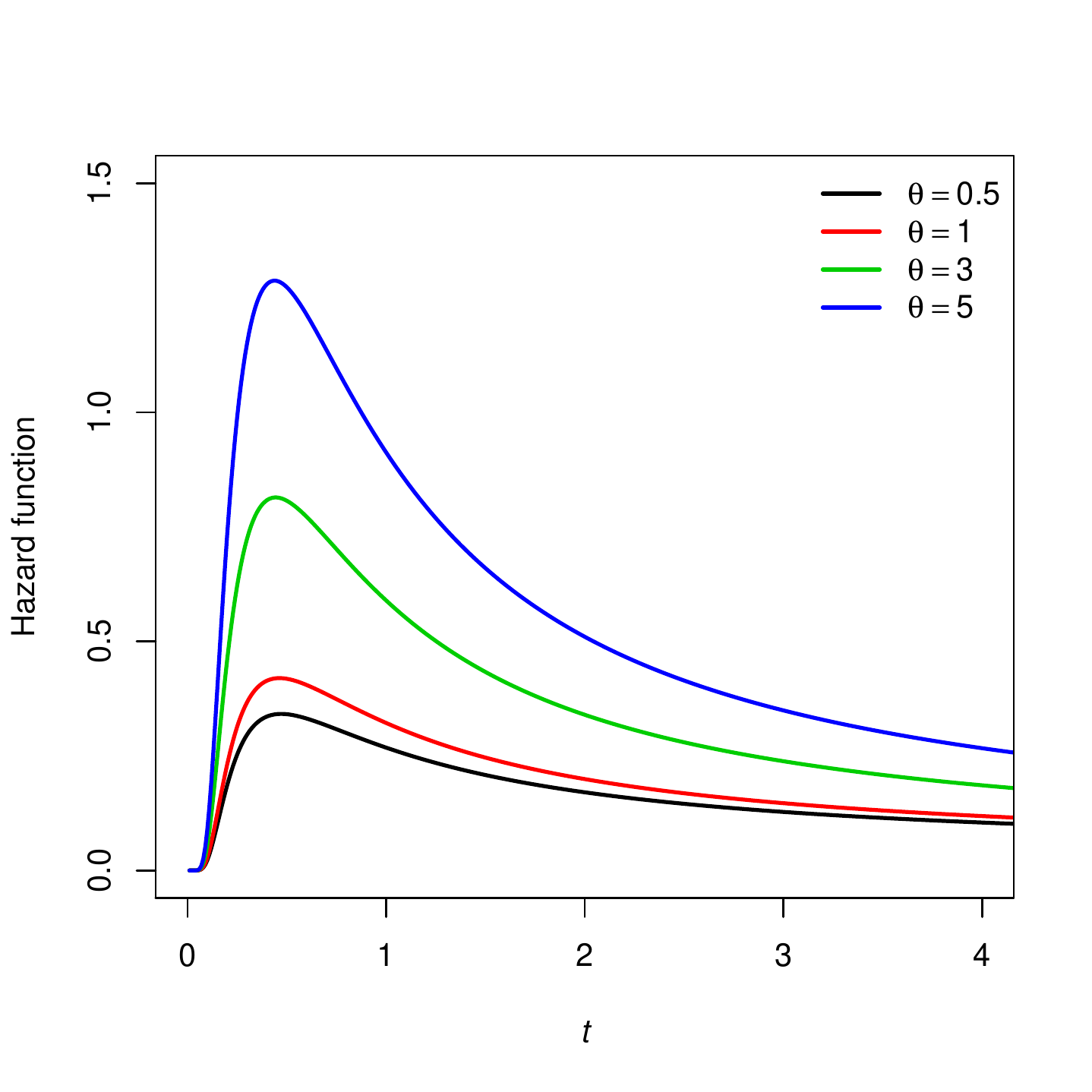}}
\caption{Plots of the $\mathcal{BSP}$ density and hazard rate functions for some parameter values.}
\label{figBSP}
\end{figure}

The expressions for the density function and the moments of the order statistics $X_{1:m}, \ldots, X_{m:m}$ from a random sample of the $\mathcal{BSP}$ distribution are
\begin{equation*}
f_{i:m}(x) = \frac{f_{\mathcal{BS}}(x; \alpha, \beta)}{B(i, m-i+1)} \sum_{k=0}^{i-1} (-1)^k \binom{i-1}{k} \left\{\frac{\exp\left[\theta\Phi(-\upsilon)\right]-1}{\mathrm{e}^\theta-1}\right\}^{m+k-i}
\end{equation*}
and
\begin{align*}
 \mathrm{E}(X^{s}_{i:m}) &= s \sum_{j=m-i+1}^m \frac{(-1)^{j-m+i-1}}{{(\exp(\theta)-1)^j}}\binom{j-1}{m-i}\binom{m}{j}\int_{0}^{\infty}x^{s-1} \left\{\exp\left[\theta\Phi(-\upsilon)\right]-1\right\}^j \mathrm{d}x.
\end{align*}

Let $x_1, \cdots, x_n$ be a random sample of size $n$ from $X\sim \mathcal{BSP}(\theta,\alpha,\beta)$. The log-likelihood function for the vector of parameters $\boldsymbol{\Theta} = (\theta,\alpha, \beta)^\top$ can be expressed as
\begin{eqnarray*}
\ell(\boldsymbol{\Theta}) &=& n\{\log(\theta) - \log[\exp(\theta) - 1] + \log[\kappa(\alpha, \beta)]\} - \frac{3}{2}\sum\limits_{i=1}^{n}\log(x_i) + \sum\limits_{i=1}^{n}\log(x_i + \beta)\\
&-& \frac{1}{2\alpha^2}\sum\limits_{i=1}^{n}\tau\left(\frac{x_i}{\beta}\right) + \theta\sum\limits_{i=1}^{n}[1-\Phi(v_i)].
\end{eqnarray*}

\subsection{Birnbaum-Saunders logarithmic distribution}

The cdf of the \textit{Birnbaum-Saunders logarithmic} ($\mathcal{BSL}$) distribution is defined by~\eqref{cdf} with $C(\theta) = -\log(1-\theta), \,\, 0 <\theta<1$, which corresponds to the logarithmic distribution. We obtain
\begin{equation*}
F_{\mathcal{BSL}}(x;\theta, \alpha, \beta) = 1 - \frac{\log\left[1-\theta(1-\Phi(\upsilon))\right]}{\log(1-\theta)}
\end{equation*}

The associated density and hazard rate functions are
\begin{equation*}
f_{\mathcal{BSL}}(x;\theta, \alpha, \beta) = -\frac{\theta f_{\mathcal{BS}(x)}}{\log(1-\theta)\left\{1 - \theta[1-\Phi(\upsilon)]\right\}}
\end{equation*}
and
\begin{equation*}
h_{\mathcal{BSL}}(x;\theta, \alpha, \beta) = -\frac{\theta f_{\mathcal{BS}}(x)}{\log\left\{1-\theta[1-\Phi(\upsilon)]\right\}\left\{1-\theta[1-\Phi(\upsilon)]\right\}}
\end{equation*}

\begin{Remark} The limit of the $\mathcal{BSL}$ hazard rate function as $x \rightarrow 0$ is $0$.
\end{Remark}

In Figure \ref{figBSL},  we plot the density and hazard rate functions of the $\mathcal{BSL}$ distribution for selected parameter values. We can verify that this distribution can have an upside-down bathtub or an increasing failure rate function depending on the values of its parameters.

\begin{figure}[!htbp]
\centering
\subfigure[$\alpha = 0.2$ and $\beta = 1.0$]{\includegraphics[scale=0.5]{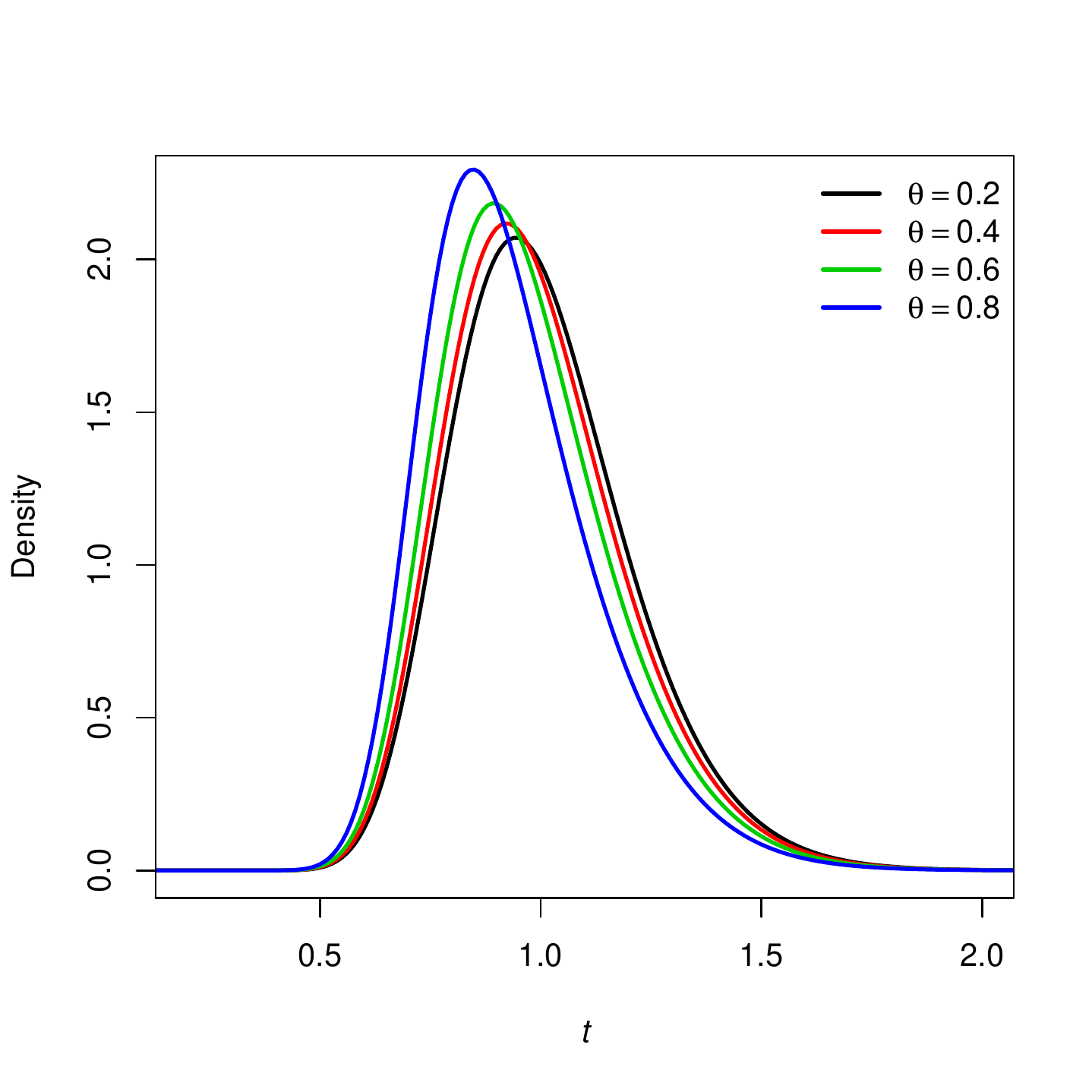}}
\subfigure[$\alpha = 2.0$ and $\beta = 1.0$]{\includegraphics[scale=0.5]{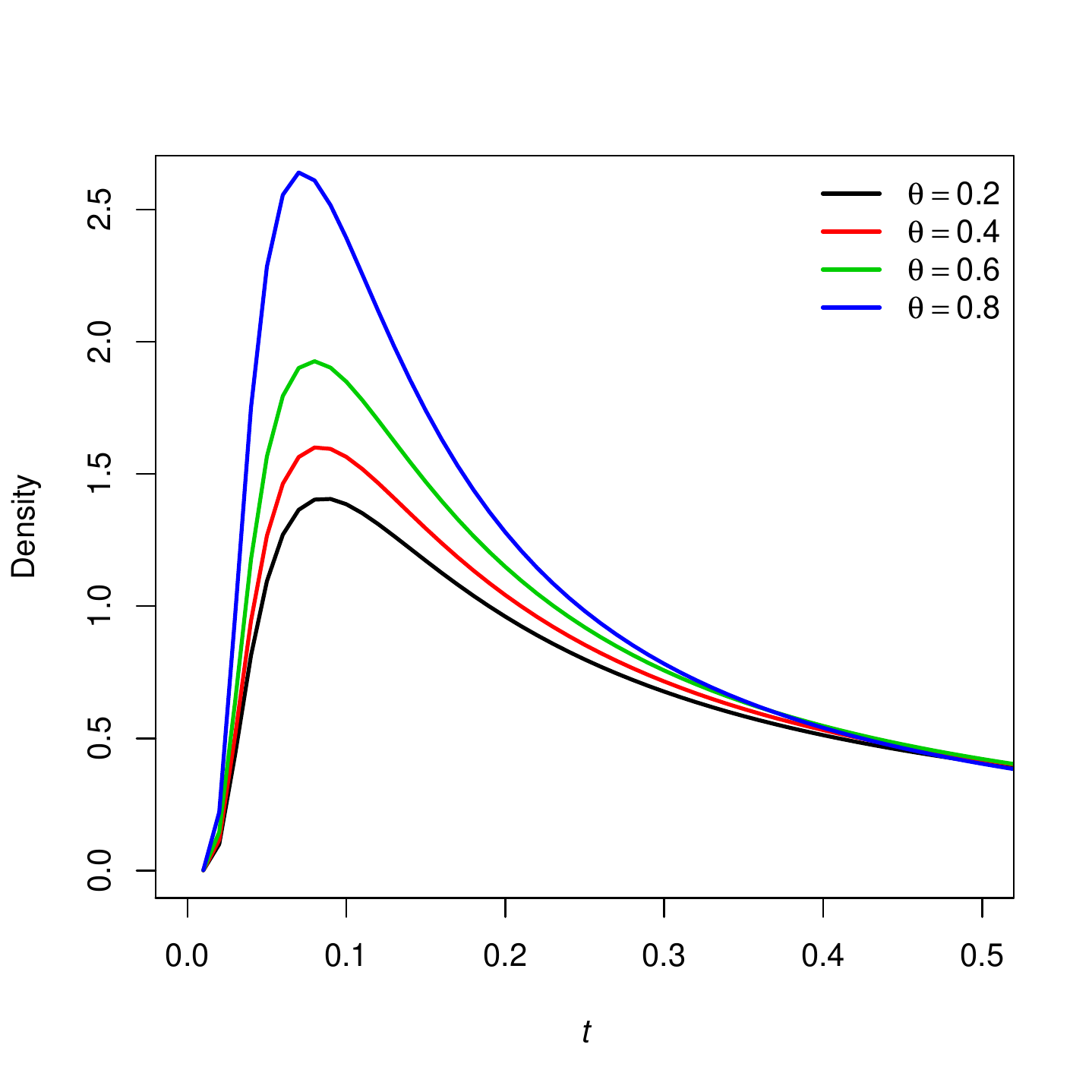}}
\subfigure[$\alpha = 0.2$ and $\beta = 1.0$]{\includegraphics[scale=0.5]{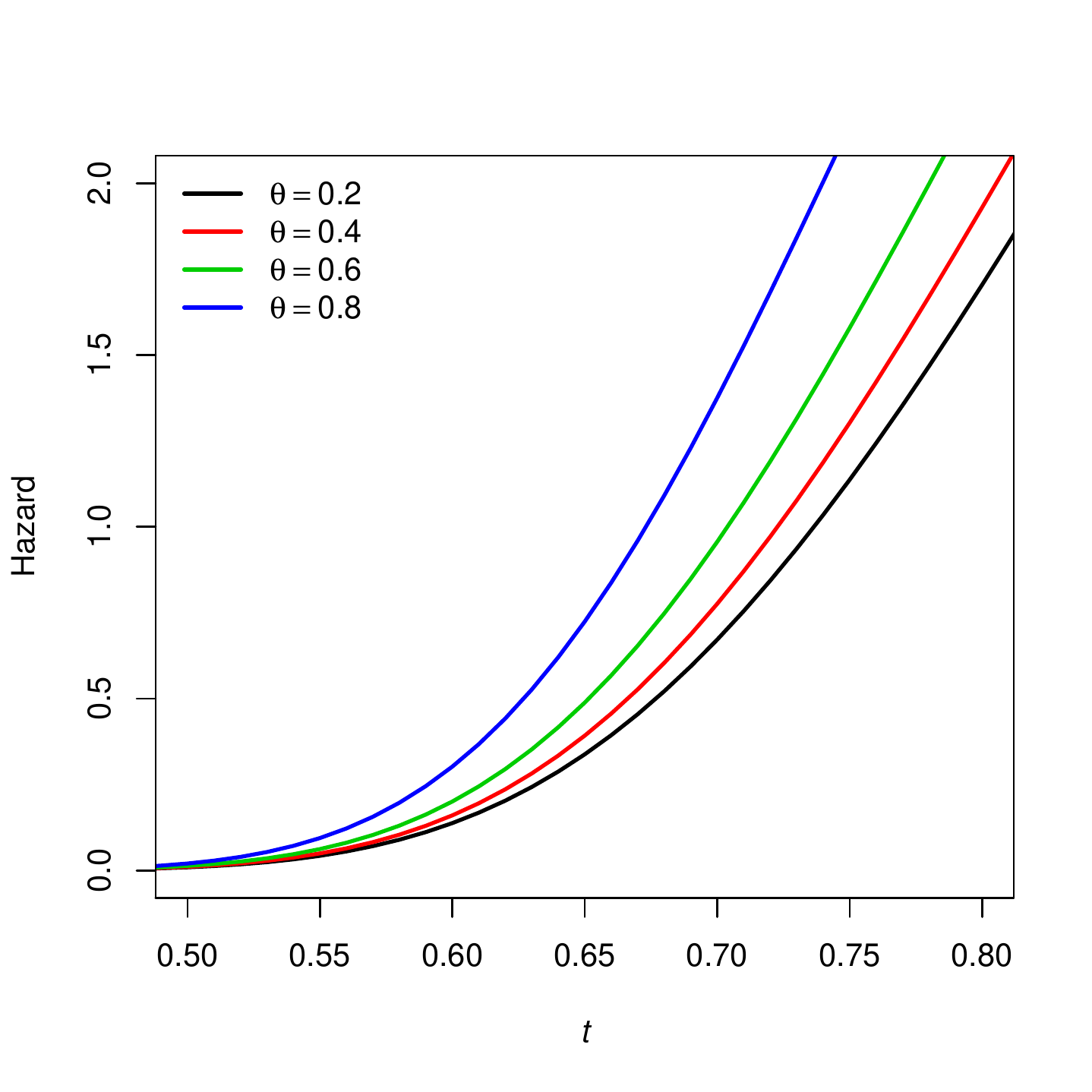}}
\subfigure[$\alpha = 2.0$ and $\beta = 1.0$]{\includegraphics[scale=0.5]{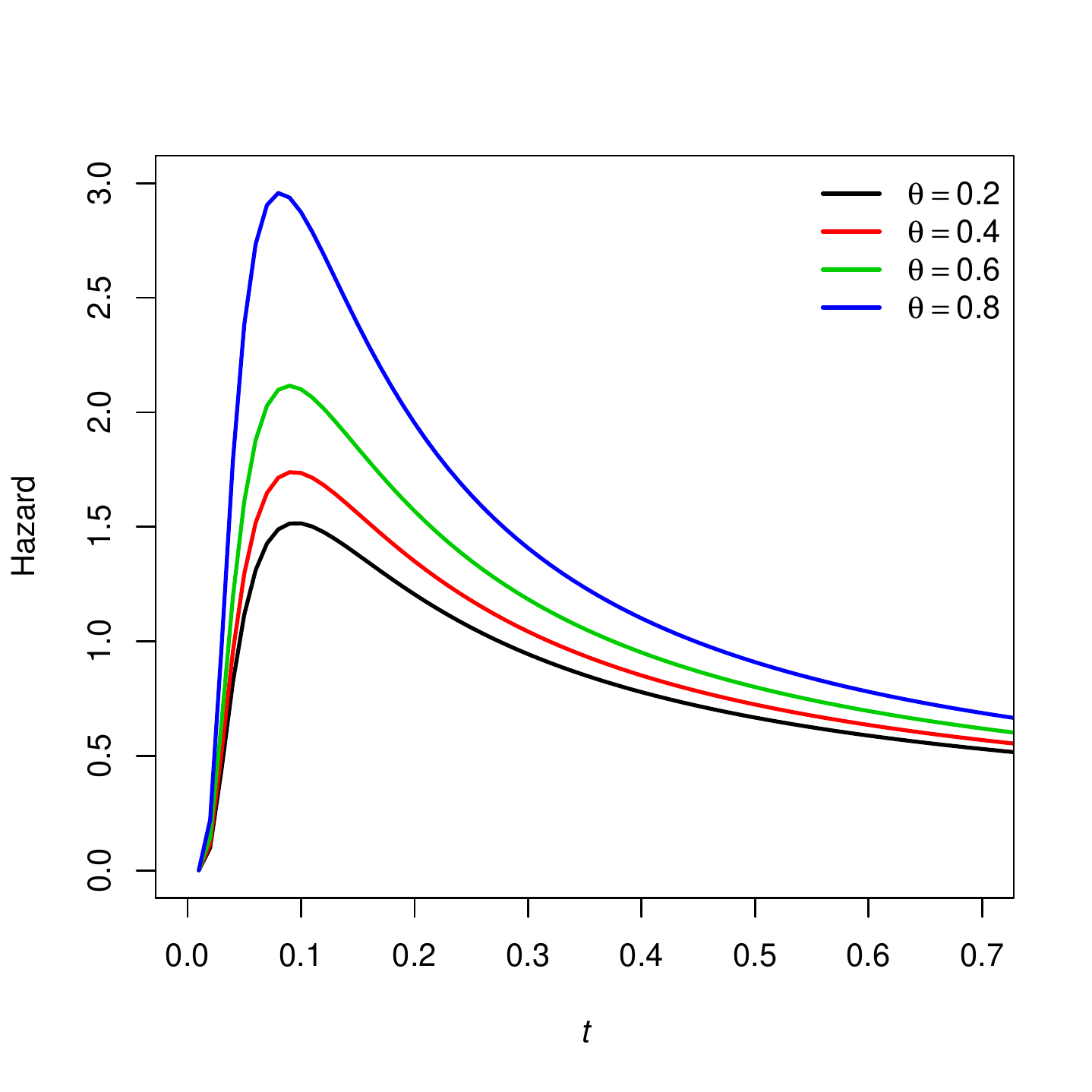}}
\caption{Plots of the $\mathcal{BSL}$ density and hazard rate functions for some parameter values.}
\label{figBSL}
\end{figure}

The expressions for the density function and the moments of the order statistics $X_{1:m}, \ldots, X_{m:m}$ from a random sample of the $\mathcal{BSL}$ distribution are
\begin{equation*}
f_{i:m}(x) = \frac{f_{\mathcal{BS}}(x; \alpha, \beta)}{B(i, m-i+1)} \sum_{k=0}^{i-1} (-1)^k \binom{i-1}{k} \left\{\frac{\log\left[1-\theta\Phi(-\upsilon)\right]}{\log(1-\theta)}\right\}^{m+k-i}
\end{equation*}
and
\begin{align*}
 \mathrm{E}(X^{s}_{i:m}) &= s \sum_{j=m-i+1}^m \frac{(-1)^{i-m-1}}{[\log(1-\theta)]^j}\binom{j-1}{m-i}\binom{m}{j}\int_{0}^{\infty}x^{s-1} \left\{\log\left[\frac{1}{1-\theta\Phi(-\upsilon)}\right]\right\}^{j} \mathrm{d}x.
\end{align*}

Let $x_1, \cdots, x_n$ be a random sample of size $n$ from $X\sim \mathcal{BSL}(\theta,\alpha,\beta)$. The log-likelihood function for the vector of parameters $\boldsymbol{\Theta} = (\theta,\alpha, \beta)^\top$ can be expressed as
\begin{eqnarray*}
\ell(\boldsymbol{\Theta}) &=& n\left\{\log(\theta) - \log\left[-\log(1-\theta)\right] + \log[\kappa(\alpha, \beta)]\right\} - \frac{3}{2}\sum\limits_{i=1}^{n}\log(x_i) + \sum\limits_{i=1}^{n}\log(x_i + \beta)\\
&-& \frac{1}{2\alpha^2}\sum\limits_{i=1}^{n}\tau\left(\frac{x_i}{\beta}\right) - \sum\limits_{i=1}^{n} \log\left\{1 - \theta[1-\Phi(\upsilon_i)]\right\}.
\end{eqnarray*}

\section{Applications}

In this section, we compare the fits of some special models of the $\mathcal{BSPS}$ class by means of two real data sets to show the potentiality of the new class. In order to estimate the parameters of these special models, we adopt the maximum likelihood method (as discussed in Section 4) and all the computations were done using the subroutine NLMixed of the SAS software. Obviously, due to the genesis of the $\mathcal{BS}$ distribution, the fatigue processes are by excellence ideally modeled by this distribution. Thus, the use of the $\mathcal{BS}$ and $\mathcal{BSPS}$ distributions for fitting these two data sets is well justified.

First, we consider a data set from Murthy \emph{et al}. (2004) consisting of the failure times of 20 mechanical components. The data are: 0.067, 0.068, 0.076, 0.081, 0.084, 0.085, 0.085, 0.086, 0.089, 0.098, 0.098, 0.114, 0.114, 0.115, 0.121, 0.125, 0.131, 0.149, 0.160, 0.485. The MLEs of the parameters (with corresponding standard errors in parentheses), the value of $-2\ell(\widehat{\boldsymbol{\Theta}})$, the Kolmogorov-Smirnov statistic, Akaike information criterion (AIC) and Bayesian information criterion (BIC) for the $\mathcal{BSG}$, $\mathcal{BSP}$, $\mathcal{BSL}$ and $\mathcal{BS}$ models are listed in Table \ref{tab1aplic}. Since the values of the AIC and BIC are smaller for the $\mathcal{BSG}$ distribution compared with those values of the other models, the new distribution seems to be a very  competitive model for these data.

\begin{table}[!htbp]
\centering
\caption{Parameter estimates, K-S statistics, AIC and BIC for failure times data}\label{tab1aplic}
\scalebox{0.87}[0.87]{
\begin{threeparttable}
\renewcommand{\arraystretch}{1.3}
\begin{tabular}{lccccccccc}
\toprule
Distribution    &$\widehat{\alpha}$    &$\widehat{\beta}$ &$\widehat{\theta}$  &&K--S&&$-2\ell(\widehat{\boldsymbol{\Theta}})$  &AIC    &BIC \\
\hline
$\mathcal{BSG}$ &0.6461        &0.4521    &0.9950     & &0.1314  &        &$-$77.6 &$-$71.6 &$-$68.6\\
                &(0.6194)\tnote{a}      &(0.8247)  &(0.0184)     & &  &        &        &        &\\
$\mathcal{BSP}$ &0.4774        &0.1735    &5.1057     & &0.1224  &        &$-$73.2 &$-$67.2 &$-$64.3\\
                &(0.0877)\tnote{a}      &(0.0304)  &(2.0932)     & &  &        &       &      &\\
$\mathcal{BSL}$ &0.3549        &0.2437    &0.9999     & &0.2055  &        &$-$74.0      &$-$68.0      &$-$65.0 \\
                &(0.0570)\tnote{a}            &(0.0419)  &(0.0001)   & &  &        &       &      &\\
$\mathcal{BS}$  &0.4466        &0.1107    &           & &0.2029  &        &$-$65.5       &$-$61.5  &$-$59.5
 \\
                &(0.0706)\tnote{a}  &(0.0108)  &     & &  &        &       &      &\\
 \hline
\end{tabular}
\begin{tablenotes}
       \item[a] Denotes the standard deviation of the MLE's of $\alpha, \beta$ and $\theta$.
\end{tablenotes}
\end{threeparttable}}
\end{table}

Plots of the pdf and cdf of the $\mathcal{BSG}$, $\mathcal{BSP}$, $\mathcal{BSL}$ and $\mathcal{BS}$ fitted models to these data are displayed in Figure \ref{fig:cdfplot}. They indicate that the $\mathcal{BSG}$ distribution is superior to the other distributions in terms of model fitting.
Additionally, it is evident that the $\mathcal{BS}$ distribution presents the worst fit to the current data and then the proposed models
outperform this distribution.

\begin{figure}[!htbp]
\centering
\subfigure[]{\includegraphics[scale=0.53]{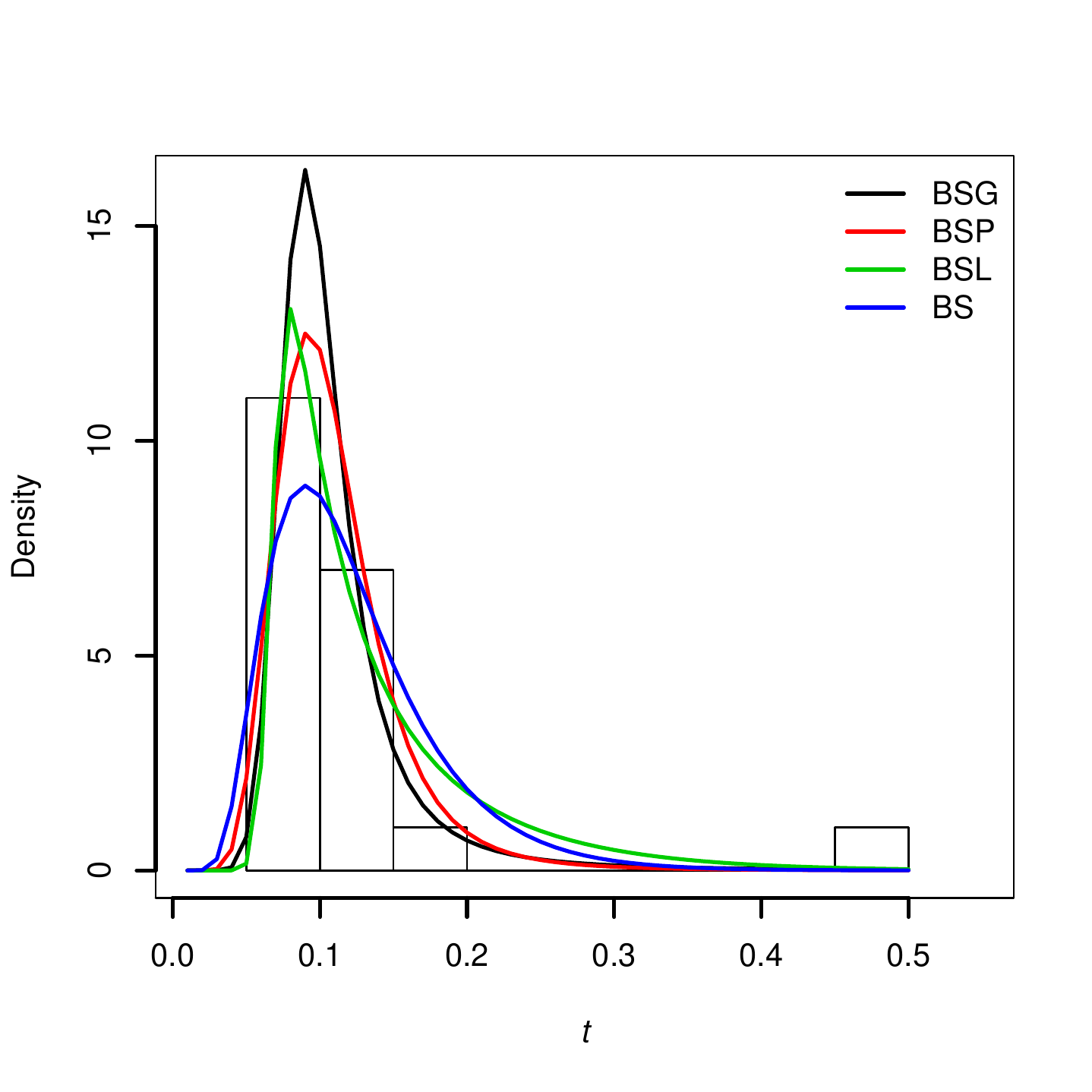}}
\subfigure[]{\includegraphics[scale=0.53]{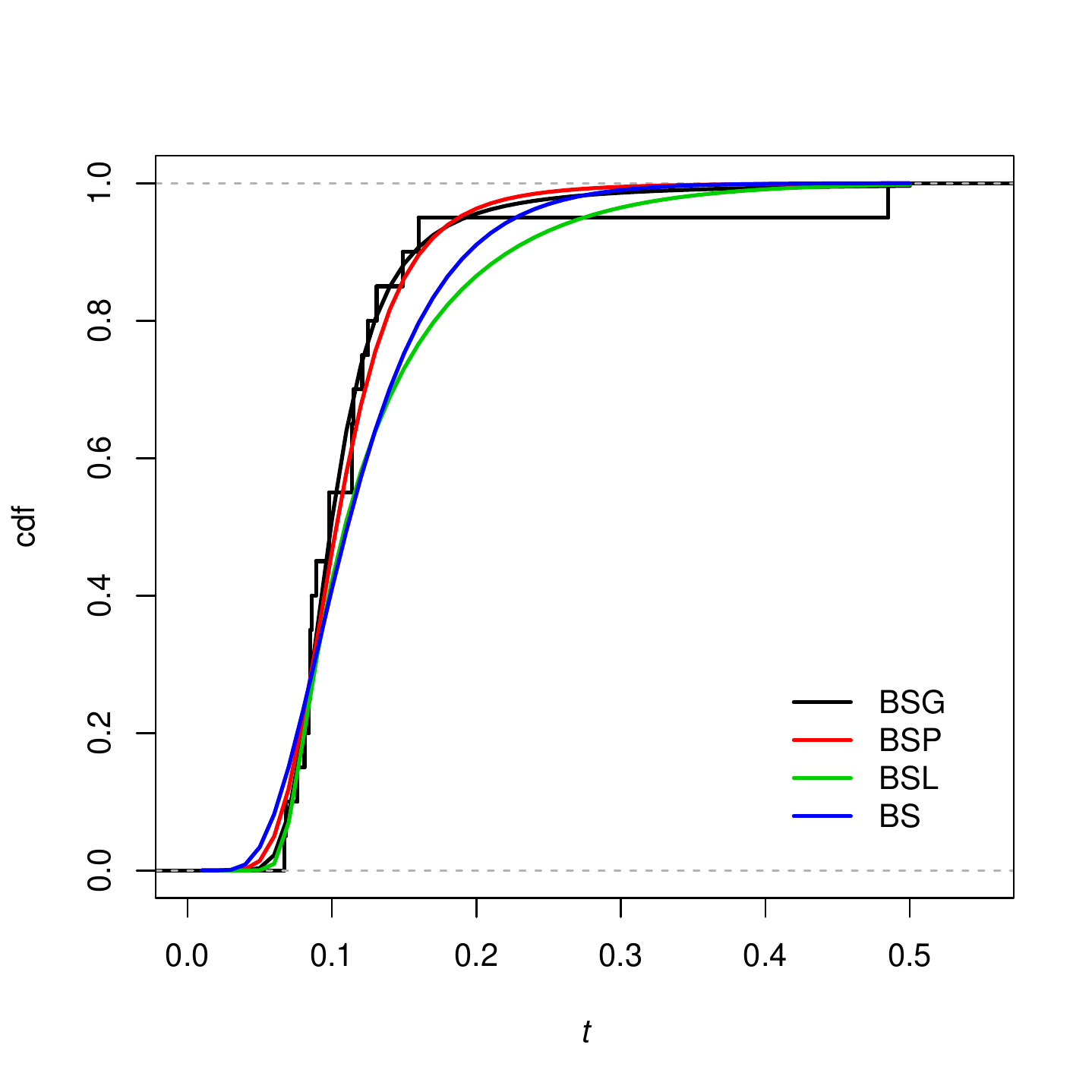}}
\caption{Estimated (a)  pdf and (b) cdf for the $\mathcal{BSG}$, $\mathcal{BSP}$, $\mathcal{BSL}$ and $\mathcal{BS}$ models for failure times data.}
\label{fig:cdfplot}
\end{figure}

We can also perform formal goodness-of-fit tests in order to verify which distribution fits better to these data. Chen and Balakrishnan (1995) proposed a general approximate goodness-of-fit test for the hypothesis $\mathcal{H}_0: X_1, \ldots, X_n$ with $X_i$ following $F(x; \theta)$, i.e., under $\mathcal{H}_0, X_1, \ldots, X_n$ is a random sample from a continuous distribution with cumulative distribution $F(x; \theta)$, where the form of $F$ is known but the $p$-vector $\boldsymbol{\theta}$ is unknown. The method is based on the Cramér-von Mises (C--M) and Anderson-Darling (A--D) statistics and, in general, the smaller the values of these statistics, the better the fit.

Table \ref{tab2aplic} gives the values of the C--M and A--D statistics ($p$-values between parentheses) for the failure times data. Thus, according to these formal tests, the $\mathcal{BSG}$ model fits the current data better than the other models. These results illustrate the potentiality of the $\mathcal{BSG}$ distribution and the importance of the additional parameter.

\begin{table}[!htbp]
\centering
\small
\caption{Goodness-of-fit tests}\label{tab2aplic}
\begin{threeparttable}
\renewcommand{\arraystretch}{1.3}
\begin{tabular}{lccccccc}
\hline
\multirow{2}{*}{Model}&\multicolumn{6}{c}{Statistics}\\
\cline{2-7}
                && C--M && &A--D \\
\hline
$\mathcal{BSG}$ && 0.0469 (0.5563)\tnote{a}&& &0.3116 (0.5517)\tnote{a}\\
$\mathcal{BSP}$ && 0.0877 (0.1650)&& &0.6629 (0.0835)\\
$\mathcal{BSL}$ && 0.0784 (0.2173)&& &0.6064 (0.1151) &\\
$\mathcal{BS}$  && 0.1967 (0.0059)&& & 1.3748 (0.0015)\\
\hline
	\end{tabular}
		\begin{tablenotes}
       \item[a] Denotes the $p$-value of the test.
		\end{tablenotes}
	\end{threeparttable}
\end{table}

As a second application, we shall analyze a data set from McCool (1974) on the fatigue lives (in hours) of 10 bearings of a certain kind. The data are: 152.7, 172.0, 172.5, 173.3, 193.0, 204.7, 216.5, 234.9, 262.6, 422.6. Table \ref{tabappli3} gives the MLEs (standard errors between parentheses) of the parameters, the values of $-2\ell(\widehat{\boldsymbol{\Theta}})$ and of the Kolmogorov-Smirnov, AIC and BIC statistics for the fitted $\mathcal{BSG}$, $\mathcal{BSP}$ and $\mathcal{BS}$ models to the second data set. Plots of the fitted $\mathcal{BSG}$, $\mathcal{BSP}$ and $\mathcal{BS}$ densities are given in Figure \ref{fig:cdfplot2}.

\begin{table}[!htbp]
\centering
\caption{Parameter estimates, K-S statistics, AIC and BIC for fatigue life data}\label{tabappli3}
\scalebox{0.87}[0.87]{
\begin{threeparttable}
\renewcommand{\arraystretch}{1.3}
\begin{tabular}{lccccccccc}
\toprule
Distribution    &$\widehat{\alpha}$    &$\widehat{\beta}$ &$\widehat{\theta}$  &&K--S&&$-2\ell(\widehat{\boldsymbol{\Theta}})$  &AIC    &BIC \\
\hline
$\mathcal{BSG}$ &0.3087        &350.98    &0.9672     & &0.1681    & &106.9 &112.9 &113.8\\
                &(0.1285)\tnote{a}      &(182.50)  &(0.0861)   & &          & &      &      &\\
$\mathcal{BSP}$ & 0.2917       &259.20    &3.1140     & &0.1633    & &108.3 &114.3 &115.2\\
                &(0.0772)\tnote{a}       &(44.4148) &(2.4589)   & &          & &      &      &\\
$\mathcal{BS}$  &0.2825        &212.05    &           & &0.1707    & &109.9 &113.9 &114.5 \\
                &(0.0632)\tnote{a}       &(18.7530) &           & &          &        &      &\\
 \hline
\end{tabular}
\begin{tablenotes}
       \item[a] Denotes the standard deviation of the MLE's of $\alpha, \beta$ and $\theta$.
\end{tablenotes}
\end{threeparttable}}
\end{table}

\begin{figure}[!htbp]
\centering
\subfigure[]{\includegraphics[scale=0.5]{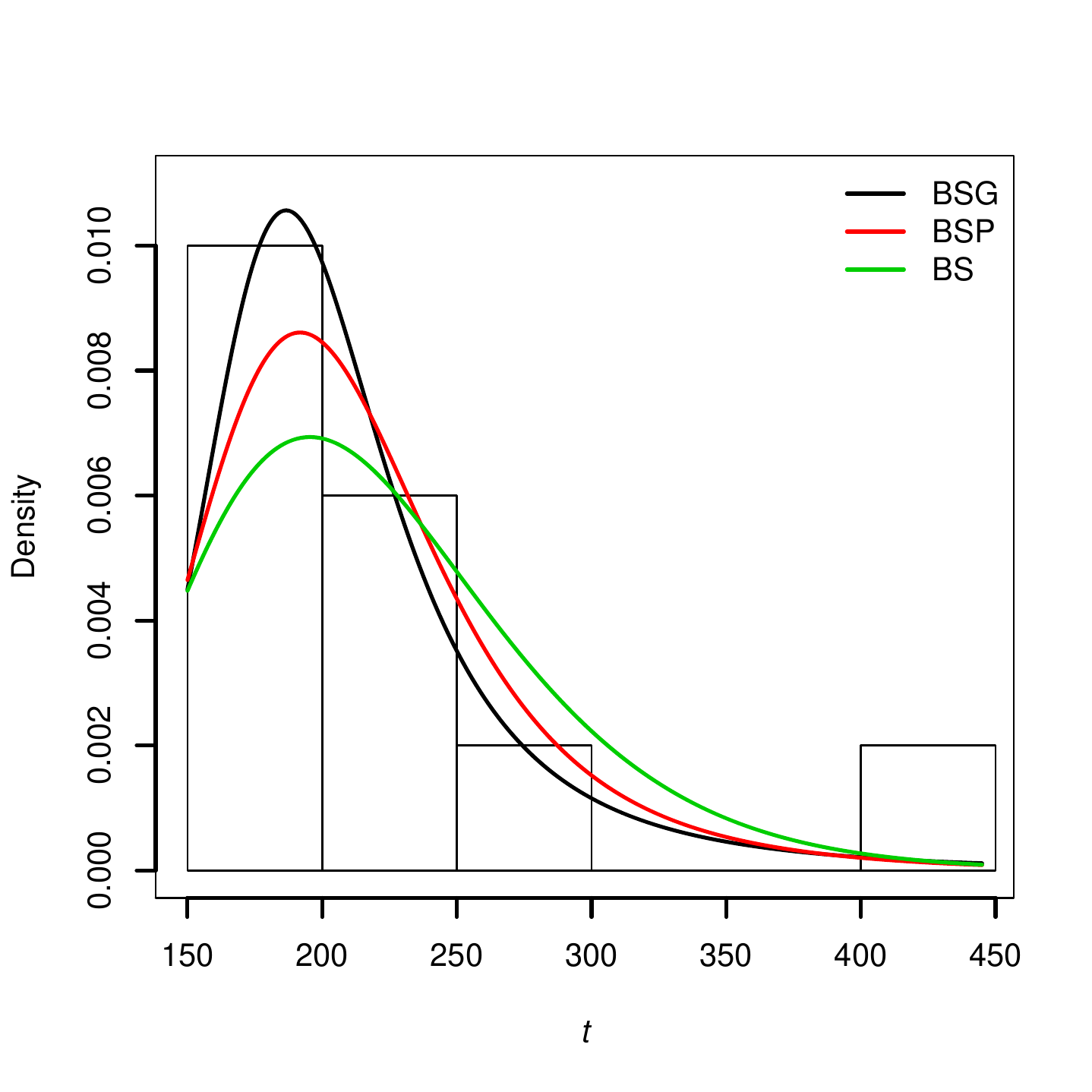}}
\subfigure[]{\includegraphics[scale=0.5]{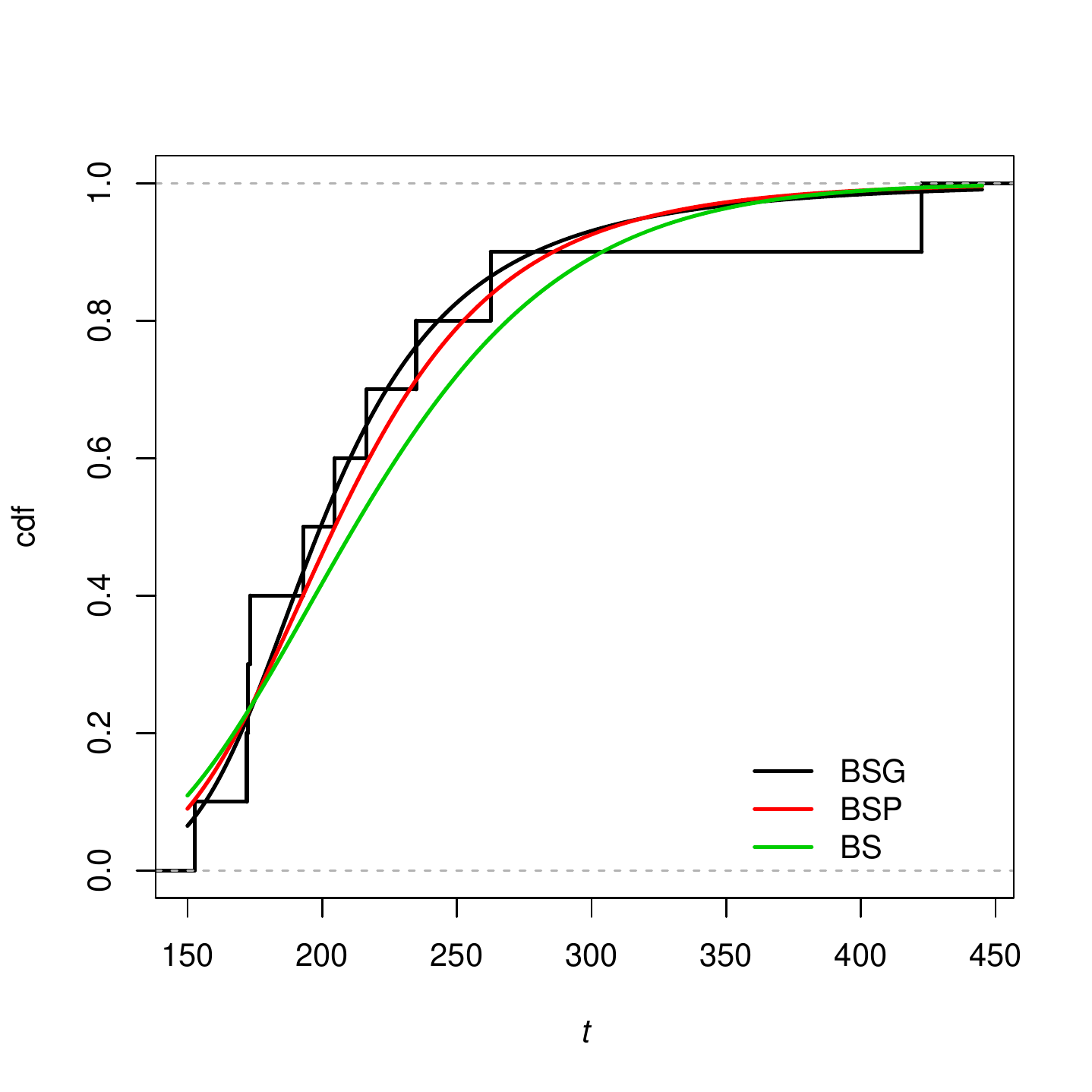}}
\caption{Estimated (a)  pdf and (b) cdf for the $\mathcal{BSG}$, $\mathcal{BSP}$ and $\mathcal{BS}$ models for fatigue lives data.}
\label{fig:cdfplot2}
\end{figure}

The figures in Table \ref{tabappli3} indicate that the $\mathcal{BSG}$ model yields a better fit to these data than the other models.
Table \ref{tab4aplic} gives the values of the C--M and A--D statistics ($p$-values between parentheses) for the fatigue life data. So, these values indicate that the null hypothesis is strongly not rejected for the $\mathcal{BSG}$ distribution, whereas the null hypothesis is weakly not rejected for the other models.
Thus, according to these goodness-of-fit tests, the $\mathcal{BSG}$ model fits the current data better than the other models.

\begin{table}[!htbp]
\centering
\small
\caption{Goodness-of-fit tests}\label{tab4aplic}
\begin{threeparttable}
\renewcommand{\arraystretch}{1.3}
\begin{tabular}{lccccccc}
\hline
\multirow{2}{*}{Model}&\multicolumn{6}{c}{Statistics}\\
\cline{2-7}
                && C--M && &A--D \\
\hline
$\mathcal{BSG}$ && 0.0370 (0.7373)\tnote{a}&&& 0.2761 (0.6575)\tnote{a}\\
$\mathcal{BSP}$ && 0.0573 (0.4108)& &&0.4243 (0.3178) \\
$\mathcal{BS}$  && 0.0862 (0.1725)&& &0.6148 (0.1098)& \\
\hline
	\end{tabular}
		\begin{tablenotes}
       \item[a] Denotes the $p$-value of the test.
		\end{tablenotes}
	\end{threeparttable}
\end{table}
\section{Concluding remarks}
The two-parameter Birnbaum-Saunders distribution is extended to a new class of fatigue life distributions by compounding the Birnbaum-Saunders and power series distributions, thus creating the Birnbaum-Saunders power series distribution. As expected, the new class includes as special model the $\mathcal{BS}$ distribution. The $\mathcal{BSPS}$ failure rate function can be increasing and upside-down bathtub shaped. Some mathematical properties of the new class are provided such as the ordinary moments, quantile function, order statistics and their moments. The estimation of the model parameters is approached by the method of maximum likelihood and the observed information matrix is derived. We fit some $\mathcal{BSPS}$ distributions to two real data sets to show the potentiality of the proposed class.

Future research could be adressed to study the complementary Birnbaum-Saunders power series distribution. This class of distributions is a suitable model in a complementary risk problem based in the presence of latent risks which arise in several areas such as public health, actuarial science, biomedical studies, demography and industrial reliability.

\section*{Acknowledgement}

The financial support from the Brazilian governmental institutions CNPq and CAPES is gratefully acknowledged.

\appendix
\section{Elements of the observed information matrix}
%\addcontentsline{toc}{chapter}{Appendix A.}\label{apA}
The observed information matrix $J(\boldsymbol{\Theta})$ for the parameters $\theta, \alpha$ and $\beta$ is given by
$$J(\boldsymbol{\Theta})= -\frac{\partial^2 \ell(\boldsymbol{\Theta})}{\partial \boldsymbol{\Theta} \partial \boldsymbol{\Theta}^\top} = \left(
\begin{array}{cccc}
U_{\theta\theta}&U_{\theta \alpha}&U_{\theta \beta}\\
\cdot&U_{\alpha \alpha}&U_{\alpha \beta}\\
\cdot&\cdot&U_{\beta \beta}\\
\end{array}\right),
$$
whose elements are
\begin{align*}
U_{\theta \theta} &= -\frac{n}{\theta^2} - n \left[\frac{C''(\theta)}{C(\theta)} - \left(\frac{C'(\theta)}{C(\theta)}\right)^2\right] -
	\sum_{i=1}^n \Phi(-\upsilon_i)^2 \left[\frac{C''\left(-\theta\Phi(-\upsilon_i)\right)}{C'\left(-\theta\Phi(-\upsilon_i)\right)}\right]^2,\\
U_{\theta \alpha} & = \frac{1}{\alpha^2} \sum_{i=1}^n \phi(\upsilon_i) \rho(t_i/\beta) \frac{C'_\alpha\left(-\theta \Phi(-\upsilon)\right)}{C'\left(-\theta \Phi(-\upsilon)\right)} + \frac{\theta}{\alpha^2} \sum_{i=1}^n \phi(\upsilon_i) \tau(\sqrt{t_i/\beta}) \frac{C'_{\alpha \beta}\left(-\theta \Phi(-\upsilon)\right)}{C'\left(-\theta \Phi(-\upsilon)\right)}\\ &- \frac{\theta}{\alpha^2}\sum_{i=1}^n\phi(\upsilon_i)\Phi(-\upsilon_i)\tau(\sqrt{t_i/\beta})\frac{C'_{\alpha}\left(-\theta \Phi(-\upsilon)\right)}{C'\left(-\theta \Phi(-\upsilon)\right)},\\
U_{\theta \beta} & = -\frac{1}{2\alpha \beta}\sum_{i=1}^n \frac{C'_{\beta}\left(-\theta \Phi(-\upsilon)\right)}{C'\left(-\theta \Phi(-\upsilon)\right)} -
\frac{\theta}{2\alpha \beta} \sum_{i=1}^n \frac{C'_{\theta \beta}\left(-\theta \Phi(-\upsilon)\right)}{C'\left(-\theta \Phi(-\upsilon)\right)} +
\frac{\theta}{2 \alpha \beta} \sum_{i=1}^n \Phi(-\upsilon_i)C'_{\alpha}\left(-\theta \Phi(-\upsilon)\right)\frac{C''\left(-\theta \Phi(-\upsilon)\right)}{C'\left(-\theta \Phi(-\upsilon)\right)},\\
U_{\alpha \alpha} & = -\frac{n}{\alpha^2}\left(1+\frac{6}{\alpha^2}\right) - \frac{3}{\alpha^4}\sum_{i=1}^n \tau\left(\frac{t_i}{\beta}\right)
 - \frac{2\theta}{\alpha^3} \sum_{i=1}^n \frac{C'_{\alpha}\left(-\theta \Phi(-\upsilon)\right)}{C'\left(-\theta \Phi(-\upsilon)\right)} +
\frac{\theta}{\alpha^2} \sum_{i=1}^n \frac{C'_{\alpha \alpha}\left(-\theta \Phi(-\upsilon)\right)}{C'\left(-\theta \Phi(-\upsilon)\right)}\\
&- \left(\frac{\theta}{\alpha^2}\right)^2 \sum_{i=1}^n \phi(\upsilon_i)\rho\left(\frac{t_i}{\beta}\right)C'_{\alpha}\left(-\theta \Phi(-\upsilon)\right)\frac{C''\left(-\theta \Phi(-\upsilon)\right)}{C'\left(-\theta \Phi(-\upsilon)\right)^2},\\
U_{\alpha \beta} & = -\frac{1}{\alpha^3 \beta} \sum_{i=1}^n \left(\frac{t_i}{\beta} - \frac{\beta}{t_i}\right) - \frac{\theta}{2 \alpha^3 \beta} \sum_{i=1}^n \upsilon_i \phi(\upsilon_i) \left(\sqrt{\frac{t_i}{\beta}} + \sqrt{\frac{\beta}{t_i}}\right) C'_{\alpha}\left(-\theta \Phi(-\upsilon)\right) \frac{C''\left(-\theta \Phi(-\upsilon)\right)}{C'\left(-\theta \Phi(-\upsilon)\right)} \\
&+\frac{\theta}{\alpha^2} \sum_{i=1}^n \frac{C'_{\alpha \beta}\left(-\theta \Phi(-\upsilon)\right)}{C'\left(-\theta \Phi(-\upsilon)\right)}, \\
\end{align*}

\begin{align*}
U_{\beta \beta} & = \frac{n}{2 \beta^2} - \frac{1}{2 \alpha^2 \beta^2}\sum_{i=1}^n \left(\frac{t_i}{\beta} + \frac{\beta}{t_i}\right) - \frac{1}{2 \alpha^2 \beta^2} \sum_{i=1}^n \tau\left(\frac{t_i}{\beta}\right) - \frac{\theta}{2 \alpha \beta^2} \sum_{i=1}^n \frac{C'_{\beta}\left(-\theta \Phi(-\upsilon)\right)}{C'\left(-\theta \Phi(-\upsilon)\right)}\\
&- \frac{\theta^2}{2 \alpha^2 \beta} \sum_{i=1}^n \phi(\upsilon_i)^2 \tau\left(\sqrt{\frac{t_i}{\beta}}\right)\left(\frac{t_i}{2 \beta^2}\sqrt{\frac{\beta}{t_i}} - \frac{1}{2 t_i}\sqrt{\frac{t_i}{\beta}}\right)\frac{C'_{\beta}\left(-\theta \Phi(-\upsilon)\right)}{C'\left(-\theta \Phi(-\upsilon)\right)^2}\\
&+ \frac{\theta}{(2\alpha\beta)^2} \sum_{i=1}^n \upsilon_i \phi(\upsilon_i) \tau\left(\sqrt{\frac{t_i}{\beta}}\right) \left(\sqrt{\frac{t_i}{\beta}} + \sqrt{\frac{\beta}{t_i}}\right) \frac{C'_{\beta}\left(-\theta \Phi(-\upsilon)\right)}{C'\left(-\theta \Phi(-\upsilon)\right)} + \frac{\theta}{4 \alpha \beta^2} \sum_{i=1}^n \phi(\upsilon_i) \rho\left(\sqrt{\frac{t_i}{\beta}}\right)\frac{C'_{\beta}\left(-\theta \Phi(-\upsilon)\right)}{C'\left(-\theta \Phi(-\upsilon)\right)}\\
&+ \frac{\theta}{2 \alpha \beta}\sum_{i=1}^n \phi(\upsilon_i) \tau\left(\sqrt{\frac{t_i}{\beta}}\right) \frac{C'_{\beta \beta}\left(-\theta \Phi(-\upsilon)\right)}{C'\left(-\theta \Phi(-\upsilon)\right)},
\end{align*}
where $C'_{k}(\cdot)$ is the first derivative of $C'(\cdot)$ with respect to $k$ and $C'_{jk}(\cdot)$ corresponds to second derivative with respect to $j$ and $k$.
%%%%%%%%%%%%%%%%%%%%%%%%%%%%%%%%%%%%%%%%%%%%%%%%%%%%%%%%%%%%%%%%%%%%%%%%%%%%%%%%%%%%%%%%%%%%%%%%%%
\section*{References}

\end{document}